\newcommand{\qed}{\hfill $\Diamond$}
\newcommand{\liable}[1]{\mathit{Liable}({#1})}
\newcommand{\tliable}[1]{\mathit{TLiable}({#1})}
\newcommand{\ithel}[2]{{#1}_{({#2})}}
\newcommand{\blk}{\mbox{\tiny$\boxempty$}}
\newcommand{\Oset}{\mathbb{O}}
\newcommand{\Rset}{\mathbb{R}}
\newcommand{\Lset}{\mathbb{L}}
\newcommand{\matchrel}{\bowtie}
\newcommand{\aca}[1]{\mathcal{#1}}
\newcommand{\conf}[1]{\langle{#1}\rangle}
\definecolor{shadecolor}{rgb}{1,0.99,0.9}
\newtheorem{fact}{Fact}
\newtheorem{definition}{Definition}
\newtheorem{property}[fact]{Property}
\newtheorem{lemma}[fact]{Lemma}
\newtheorem{theorem}{Theorem}
\newtheorem{corollary}[fact]{Corollary}
\newtheorem{example}{Example}
\newtheorem{proposition}{Proposition}
\newtheorem{remark}{Remark}
\newenvironment{proof}{\noindent\emph{Proof.}}{}
\renewcommand{\epsilon}{\varepsilon}
\newcommand{\Act}{\ensuremath{\textsf{Act}}}
\newcommand{\sem}[2]{\ensuremath{\llbracket {#1} \rrbracket_{#2}}}
\newcommand{\hbra}{
\hbox to 1 \textwidth{\vrule width0.3mm height 1.8mm depth-0.3mm
                    \leaders\hrule height1.8mm depth-1.5mm\hfill
                    \vrule width0.3mm height 1.8mm depth-0.3mm}}
\newcommand{\hket}{
\hbox to 1 \textwidth{\vrule width0.3mm height1.5mm
                    \leaders\hrule height0.3mm\hfill
                    \vrule width0.3mm height1.5mm}}
\newcommand{\keyword}[1]{\textsf{#1}\xspace}
\newcommand{\register}[1]{\keyword r_{#1}}
\newcommand{\lockbit}[1]{\textbf{#1}}
\newcommand{\participant}[1]{\mathtt{#1}}
\newcommand{\osred}{\rightarrow}                
\newcommand{\R}{\osred}
\newcommand{\msred}{\osred^\ast}     
\newcommand{\RR}{\msred}
\newcommand{\TRANS}[1]{\xrightarrow{#1}}
\newcommand{\TRANSS}[1]{{\xrightarrow{\raisebox{-.3ex}[0pt][0pt]{\scriptsize $#1$}}}}
\newcommand{\TRANSOB}[1]{\stackrel{#1}{\twoheadrightarrow}}
\newcommand{\bottom}{\perp}
\newcommand{\PSEND}[2]{\co{ \msgsort{#2}} @ \ptp{#1}  }
\newcommand{\PRECEIVE}[2]{ \msgsort{#2} @ \ptp{#1} }
\newcommand{\lgge}[1]{\mathcal{L}(#1)}
\newcommand{\msgsort}[1]{\mathit{#1}}
\newcommand{\csconf}[2]{(\vec{#1} ;\vec{#2})}
\newcommand{\defref}[1]{Def.~\ref{#1}}
\newcommand{\emptyword}{\varepsilon}
\newcommand{\conctrans}[1]{\OPconctrans}
\newcommand{\Alice}{\ptp{Alice}}
\newcommand{\Bob}{\ptp{Bob}}
\newcommand{\Carol}{\ptp{Carol}}
\DeclareMathOperator{\OPconctrans}{\blackdiamond}
\newcommand{\blackdiamond}{\diamondsuit}
\newcommand{\rcdt}{\{l_i\colon T_i\}_{i\in I}}
\newcommand{\brancht}[1][\alpha]{\branch\rcdt}
\newcommand{\AT}[2]{#1\colon\! #2}
\newcommand{\co}{\overline}
\newcommand{\ASET}[1]{\{ {#1} \}}
\newif\ifny\nytrue
\newif\ifvv\vvtrue
\newif\ifkohei\koheitrue
\newif\ifmarco\marcotrue
\def\fps@figure{tp}      
\def\fps@table{tp}
\newcommand{\node}{\mathsf{n}}
\newcommand{\ptp}[1]{\mathtt{#1}}
\newcommand{\enewchan}[2]{\ensuremath{\mathtt{newChan}\ \AT{#1}{#2}}}
\newcommand{\MRED}[1][]{%
  \ensuremath{%
    \ifthenelse{\equal{#1}{}}{%
      \rightarrow\!\!\!\!\rightarrow%
    }{%
      \rightarrow\!\!\!\!\rightarrow_{#1}%
    }%
  }%
}
\newcommand{\RED}[1][]{%
  \ensuremath{%
    \ifthenelse{\equal{#1}{}}{%
      \longrightarrow%
    }{%
      \longrightarrow_{#1}%
    }%
  }%
}
\newcommand{\LLC}%
{{\mathsf{L}\mathsf{L}\mathsf{S}}^{\mathsf{C}}}
\newcommand{\LLA}%
{{\mathsf{L}\mathsf{L}\mathsf{S}}^{\mathsf{A}}}
\newcommand{\newlinchan}[3]
{\ensuremath{\mathtt{newCont}\ \AT{#1}{#2}\ \mathtt{in}\ #3}}
\newcommand{\newchan}[3]
{\enewchan{#1}{#2}\,;\,#3}
\newcommand{\SIL}%
{{\mathsf{S}\mathsf{I}\mathsf{L}}\xspace}
\newcommand{\SILc}%
{{\mathsf{S}\mathsf{I}\mathsf{L}^{\text{\scriptsize C}}}\xspace}
\newcommand{\SILC}%
{{\mathsf{S}\mathsf{I}\mathsf{L}^{\text{\scriptsize C}}}\xspace}
\newcommand{\SILa}%
{{\mathsf{S}\mathsf{I}\mathsf{L}^{\text{\scriptsize A}}}\xspace}
\newcommand{\SILA}%
{{\mathsf{S}\mathsf{I}\mathsf{L}^{\text{\scriptsize A}}}\xspace}
\newcommand{\causes}[1]{\searrow}
\newif\ifdm\dmtrue
\newif\ifdmr
\newcommand{\PSet}{\!\mathcal{P}\!}
\newcommand{\ie}{i.e.~}
\newcounter{analphabet}
{\rm%
\begin{list}%
{\arabic{analphabet}. }%
{\usecounter{analphabet}%
 \addtolength{\labelwidth}{5mm}%
\addtolength{\leftmargin}{-2mm}%
\setlength{\rightmargin}{0pt}%
\setlength{\itemsep}{0mm}%
\setlength{\parsep}{0pt}}}%
{\end{list}}
\newcommand{\mmdef}{\overset{{\text{def}}}{=}}
\newcommand{\st}{\;\big|\;} 
\newcommand{\qst}{\;\colon\;} 
\newcommand{\chanset}{C}
\newcommand{\RS}{\mathit{RS}}
\newcommand{\p}{\ensuremath{\participant{p}}}
\newcommand{\q}{\ensuremath{\participant{q}}}
\newcommand{\rr}{\ensuremath{\participant{r}}}
\newcommand{\s}{\ensuremath{\participant{s}}}
\newcommand{\action}{\ell}
\DeclareSymbolFont{bbsymbol}{U}{bbold}{m}{n}
\DeclareMathSymbol{\bbsemicolon}{\mathbin}{bbsymbol}{"3B}
\newcommand{\ASigma}{\Rset \cup \Oset}
\newcommand{\MSACore}%
{CMSA\xspace}
\newcommand{\MSA}%
{MSA\xspace}
\title{From Orchestration to Choreography \\ through Contract Automata
\thanks{This work has been partially supported by the MIUR project \emph{Security Horizons}
and IST-FP7-FET open-IP project \emph{ASCENS}}}
\author{Davide Basile \quad Pierpaolo Degano \quad Gian-Luigi Ferrari 
\institute{Dipartimento di Informatica, Universit\`{a} di Pisa, Italy}
\email{\{basile,degano,giangi\}@di.unipi.it}\\
\and Emilio Tuosto
\institute{Computer Science Department, University of Leicester}
\email{emilio@le.ac.uk}
}
\begin{document}
\maketitle

%
%
%
%
%
%
%

%
%
%
%


\begin{abstract}
  We study the relations between a contract automata and an
  interaction model. In the former model, distributed services are
  abstracted away as automata - oblivious of their partners - that
  coordinate with each other through an orchestrator. The interaction
  model relies on channel-based asynchronous communication and
  choreography to coordinate distributed services.

  We define a notion of strong agreement on the contract model,
  exhibit a natural mapping from the contract model to the interaction
  model, and give conditions to ensure that strong agreement
  corresponds to well-formed choreography.
\end{abstract}


\section{Introduction}


We investigate the relations between two models of distributed
coordination: \emph{contract automata}~\cite{BasileDF14} and
\emph{communicating machines}~\cite{BZ83}.

The former model has been recently introduced as a \emph{contract-based}
coordination framework where contracts specify the expected behaviour
of distributed components oblivious of their communicating partners.
The underlying coordination mechanism of contract automata is
orchestration.
In fact, such model envisages components capable of communicating
messages on some ports according to an automaton specifying the
component's behavioural contract.
These messages have to be thought of as directed to an orchestrator
synthesised out of the components; the orchestrator directs the
interactions in such a way that only executions that ``are in
agreement'' happen.
In this way, it is possible to transfer the approach
of~\cite{Bart2012COORDINATION,BSTZ13} to contract automata so to
identify misbehaviour of components that do not realise their
contract.

We illustrate this with the following simple example.
Alice is willing to lend her aeroplane toy, Bob offers a bike toy in
order to play with an aeroplane toy, while Carol wants to play with an
aeroplane or a bike toy.
Let $\overline{a}$ and $\overline{b}$ denote respectively the actions
of offering an aeroplane or a bike toy and, dually, $a$ and $b$ denote
the corresponding request actions.
The contract automata for Alice, Bob, and Carol correspond to the
following regular expressions, used here for conciseness:
\[
\Alice = \overline{a}
\qquad\qquad
\Bob=\overline{b}.a + a.\overline{b}
\qquad\qquad
\Carol = a + b
\]%
If Alice exchanges her toy with Bob, then all contracts are
fulfilled.
Instead, if not coordinated, Alice, Bob, and Carol may share their toys in a
way that does not fulfill their contracts. In fact, Alice can give her
aeroplane to Bob or Carol, while Carol can receive the bike from Alice
or Bob, therefore if Alice gives her aeroplane to Carol the contracts
of Alice and Carol are fulfilled while Bob's contract is not.
In the model of contract automaton the coordinator acts as the mam of
the three kids who takes their desires and suggests how to satisfy
them (and reproaches those who do not act according to their
declared contract).

Communicating machines - the other model we consider here - were
introduced with the aim of studying distributed communication
protocols and ensure the correctness of distributed components again
formalised as automata.
But - unlike contract automata - communicating machines do not
require an orchestrator since they interact directly with each other
through (FIFO) buffers.
In fact, a relation between communicating and
distributed choreographies has been recently proved in~\cite{dy12}.

We show that these models - invented to address different problems and
having different coordination mechanisms - are related.
For this purpose, we introduce the notion of \emph{strong agreement},
which requires the fulfillment of all offers \emph{and} requests.
Strong agreement differs from previous notions of agreements for
contract automata (cf. Section~\ref{sec:agreement}) and enables us to
introduce strongly safe contract automata, that is those automata
accepting only computations that are in strong agreement.
Strong agreement and safety are key to establish a correspondence from
contract automata to communicating machines.

Indeed if a contract automaton enjoys strong safety (and it is
well-behaved on branching constructs) then the corresponding
communicating machines are a 
well-formed choreography.

\paragraph{Structure of the paper.}
We recall contract automata and communicating finite-state machines in
Section~\ref{sec:blk}.
The new notion of agreement on contract automata is in Section~\ref{sec:agreement}.
The translation of contract automata into communicating machines
is given in Section~\ref{sec:mapping} where we also prove our main theorem.
In Section~\ref{sec:extensions} we discuss possible extensions of our
results  to 
other notions of agreement for contract automata and semantics for
communicating machines.
Finally, concluding remark are in Section~\ref{sec:conc}.

%
%
\newcommand{\aout}{\overline{a}}
\newcommand{\areq}{a}
\newcommand{\cout}{\overline{c}}
\newcommand{\creq}{c}
\newcommand{\dout}{\overline{d}}
\newcommand{\dreq}{d}
\newcommand{\okout}{\overline{ok}}
\newcommand{\okreq}{ok}

\label{sec:intro}

\section{Background}\label{sec:blk}

This section summarises the automata models we use in the paper.
Both models envisage distributed computations as enacted by components
that interact by exchanging messages.
As we will see, in both cases components, abstracted
away as automata, yield systems also formalised as automata.

\subsection{Contract Automata}\label{sec:ca}
%
Before recalling contract automata (introduced in~\cite{BasileDF14}), we
fix our notations and preliminary definitions.
Given a set $X$, as usual, $X^\ast \mmdef \bigcup_{n \geq 0}X^n$ is
the set of finite \emph{words} on $X$ ($\emptyword$ is the empty word,
$ww'$ is the concatenation of words $w,w' \in X^\ast$, $\ithel w i$
denotes the $i$-th symbol of $w$, and $|w|$ is the length of $w$);
write $x^n$ for the word obtained by $n$ concatenations of $x \in X$
and $x^\ast$ for a finite and arbitrarily long repetition of $x \in
X$.
It will also be useful to consider $X^n$ as a set of tuples and let
$\vec x$ to range over it.
Sometimes, overloading notation (and terminology), we confound tuples
on $X$ with words on $X$ (e.g., if $\vec w \in X^n$, then $|\vec w| = n$
is the length of $w$ and $\ithel{\vec w} i$ denotes the $i$-th element of $w$).

Transitions of contract automata will be labelled with elements in the
set $\Lset \mmdef \Rset \cup \Oset \cup \{\blk\}$ where
\begin{itemize}
\item \emph{requests} of components will be built out of $\Rset$ while
  their \emph{offers} will be built out of $\Oset$,
\item  $\Rset \cap \Oset = \emptyset$, and
\item $\blk \not\in \Rset \cup \Oset$ is a distinguished label to
  represent components that stay idle.
\end{itemize}
We let $a,b,c,\ldots$ range over $\Lset$ and fix an involution
$\co \cdot : \Lset \to \Lset$ such that
\[
\co \Rset \subseteq \Oset,
\qquad\qquad
\co \Oset \subseteq \Rset,
\qquad\qquad
\forall a \in \Rset \cup \Oset \qst \co{\co a} = a,
\qquad\text{and}\qquad
\co \blk = \blk
\]

A contract automaton (cf. \defref{def:contract}) represents the
behaviour of a set of participants (possibly made of a single
participant) capable of performing some \emph{actions}; more
precisely, as formalised in \defref{def:actions}, the actions of a
contract automaton allow them to \lq\lq\ advertise\rq\rq\ offers,
\lq\lq make\rq\rq\ requests, or \lq\lq handshake\rq\rq\ on
simultaneous offer/request matches.
\begin{definition}[Actions]\label{def:actions}
  A tuple $\vec a$ on $\Lset$ is
  \begin{itemize}
  \item a \emph{request (action) on $b$} iff $\vec a$ is of the form
    $\blk^\ast b \blk^\ast$ with $b \in \Rset$
  \item an \emph{offer (action) on $b$} iff $\vec a$ is of the form
    $\blk^\ast b \blk^\ast$ with $b \in \Oset$
  \item a \emph{match (action) on $b$} iff $\vec a$ is of the form $\blk^\ast
    b \blk^\ast \co b \blk^\ast$ with $b \in \Rset \cup \Oset$.
  \end{itemize}
  We define the relation $\matchrel \subseteq \Lset^\ast \times \Lset^\ast$
  as the symmetric closure of $\stackrel \cdot \matchrel \subseteq
  \Lset^\ast \times \Lset^\ast$ where $\vec a_1 \stackrel \cdot
  \matchrel \vec a_2$ iff
  \begin{itemize}
  \item $\vec a_1$ and $\vec a_2$ are actions of the same length
  \item $\exists b \in \Rset \cup \Oset \qst \vec a_1 \text{ is an
      offer on } b \implies \vec a_2 \text{ is a request on } b$,
  \item $\exists b \in \Rset \cup \Oset \qst \vec a_1 \text{ is a
      request on } b \implies \vec a_2 \text{ is a offer on } b$,
  \end{itemize}
  We write $\vec a_1 \matchrel_b \vec a_2$ when there is $b \in \Rset
  \cup \Oset$ such that $\vec a_1$ and $\vec a_2$ are actions on $b$
  and $\vec a_1 \matchrel \vec a_2$.
\end{definition}

\begin{fact}
  $\matchrel$ is an equivalence relation on $\Lset^\ast$.
\end{fact}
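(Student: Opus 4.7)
The plan is to verify the three axioms of an equivalence relation for $\matchrel$ in turn.

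Symmetry is immediate from Definition~\ref{def:actions}: $\matchrel$ is declared to be the symmetric closure of $\stackrel \cdot \matchrel$, so $\vec a_1 \matchrel \vec a_2$ entails $\vec a_2 \matchrel \vec a_1$ by construction.

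For reflexivity I would show that $\vec a \stackrel \cdot \matchrel \vec a$ holds for every $\vec a \in \Lset^\ast$ by inspecting the three defining clauses applied to $(\vec a,\vec a)$. The equal-length clause is trivial. Each of the two existential clauses of the form ``$\exists b \in \Rset \cup \Oset \qst \ldots \implies \ldots$'' can be discharged by choosing a witness $b$ that does not occur among the symbols of $\vec a$: the premise of the implication is then false and the clause holds vacuously. Such a $b$ always exists because $\vec a$ contains only finitely many positions while $\Rset \cup \Oset$ furnishes a sufficient supply of fresh labels.

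For transitivity, I would start from $\vec a_1 \matchrel \vec a_2$ and $\vec a_2 \matchrel \vec a_3$, unfold the symmetric closure in each hypothesis into two possibilities, and treat the four resulting sub-cases. In every sub-case the length condition $|\vec a_1|=|\vec a_3|$ follows by transitivity of equality through $|\vec a_2|$, while the two existential clauses between $\vec a_1$ and $\vec a_3$ are discharged by reusing a witness $b$ supplied by one of the hypotheses and chaining the offer/request roles through $\vec a_2$ by means of the involution $\co{\cdot}$ and of the disjointness $\Rset \cap \Oset = \emptyset$; when no useful witness is available from the hypotheses the vacuous-choice trick used for reflexivity applies once again.

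The main obstacle I anticipate is the case split in the transitivity step: the four unfoldings of the symmetric closure yield different role configurations, and one must be careful to align which of $\vec a_1, \vec a_3$ plays the offer role and which the request role, so that the chosen witness fits the correct clause. A small bookkeeping scheme recording the polarity (offer, request, match, or none) of each tuple in every sub-case should keep the verification routine.
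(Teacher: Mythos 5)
The paper states this Fact without proof, so there is nothing to compare your argument against except the claim itself; and your argument does not establish it, the failure point being reflexivity. The first clause of $\stackrel{\cdot}{\matchrel}$ in Definition~\ref{def:actions} is not merely an ``equal-length clause'': it requires that $\vec a_1$ and $\vec a_2$ be \emph{actions}, i.e.\ requests, offers, or matches in the sense of the three bullet items of that definition. Applied to the pair $(\vec a,\vec a)$ this demands that $\vec a$ itself be an action, and most tuples in $\Lset^\ast$ are not: the all-idle tuple $\blk^n$, or any tuple carrying two non-complementary non-idle labels, is neither a request nor an offer nor a match, hence is not related to itself. Your vacuous-witness trick discharges the two implication clauses but silently skips this membership requirement, so reflexivity on all of $\Lset^\ast$ is not, and cannot be, established; at best $\matchrel$ is an equivalence on the set of actions of each length (a partial equivalence relation on $\Lset^\ast$), and the statement would have to be read or repaired accordingly.

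A second, separate concern: the vacuous-witness reading you adopt is faithful to the formula as printed, but it trivialises the relation --- under it any two actions of the same length become related (choose $b$ on which $\vec a_1$ is neither an offer nor a request and both implications hold vacuously), so an offer on $a$ would be $\matchrel$-related to an offer on some unrelated $c$. That is incompatible with the role $\matchrel$ plays in Definition~\ref{def:prod}, where it is used precisely to detect complementary offer/request pairs that may handshake. The evidently intended reading is that the witness $b$ is one on which the two actions really are complementary; but under that reading your transitivity step collapses, since chaining an offer on $b$ with a request on $b$ and back again would force an offer to be related to an offer, which the clauses then forbid. Either way --- literal reading or intended reading --- the proposed proof does not go through for the statement as phrased on $\Lset^\ast$; only the symmetry step, which indeed holds by construction of the symmetric closure, is unproblematic.
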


\begin{definition}[Contract Automata]
  \label{def:contract}
  Let $\mathcal{Q}$ (ranged over by $q_1,q_2, \ldots$) be a finite set
  of states.
  A \emph{contract automaton of rank $n$} is a (finite-state)
  automaton $\aca A  = \conf{\mathcal Q ^n, \vec{q_0}, \Lset^n, T, F}$, where
  \begin{itemize}
  \item  $\vec{q_0} \in \mathcal Q^n$ is the \emph{initial} state
  \item $F \subseteq \mathcal Q^n$ is the set of \emph{accepting} states
  \item $T \subseteq \mathcal Q^n \times \Lset^n \times \mathcal Q^n$
    is the set of \emph{transitions} such that
    $(\vec q, \vec a, \vec{q}') \in T$ iff
    \begin{itemize}
    \item if $\ithel {\vec a} i = \blk$ then $\ithel {\vec q} i =
      \ithel {\vec{q}'} i$ (\ie, the $i$-th participant stays idle)
      and
    \item $\vec a$ is either a request, or an offer, or else a match action
    \end{itemize}
  \end{itemize}
  A \emph{principal} is a contract automaton $\aca A$ of rank $1$ such
  that, for any two transitions $(q_1, a_1, q'_1)$, $(q_2, a_2, q'_2)$
  in $\aca A$, it is not the case that $a_1 \matchrel a_2$.
\end{definition}

\begin{example}\label{ex:game}
The principals of Alice, Bob, and Carol in Section~\ref{sec:intro} are given below
\[\begin{array}{c@{\qquad\qquad}c@{\qquad\qquad}c}
\begin{tikzpicture}[->,>=stealth',shorten >=1pt,auto,node distance=3cm,
                    semithick, every node/.style={scale=0.7}]
  \tikzstyle{every state}=[fill=white,draw=black,text=black]

  \node[initial,state] (A)                   	 {$q_0$};
  \node[state,accepting] (B)  [below of=A]{$q_1$};

  \path (A)			edge[below]             node[left]{$\overline a$} (B);
\end{tikzpicture}
&
\begin{tikzpicture}[->,>=stealth',shorten >=1pt,auto,node distance=3cm,
                    semithick, every node/.style={scale=0.7}]
  \tikzstyle{every state}=[fill=white,draw=black,text=black]

  \node[initial,state] (A)                   	 {$q_0$};
  \node[state] (C) [below of = A]        	 {$q_1$};
  \node[state,accepting] (D)  [right of=C]                	 {$q_3$};
  \node[state] (B)  [right of=A]{$q_2$};

  \path (A)			edge             node[above]{$a$} (B);
  \path (B)			edge             node[right]{$\overline{b}$} (D);
  \path (C)			edge             node[above]{$a$} (D);
  \path (A)			edge             node[left]{$\overline b$} (C);
\end{tikzpicture}
&
\begin{tikzpicture}[->,>=stealth',shorten >=1pt,auto,node distance=3cm,
                    semithick, every node/.style={scale=0.7}]
  \tikzstyle{every state}=[fill=white,draw=black,text=black]

  \node[initial,state] (A)                   	 {$q_0$};
  \node[state,accepting] (B)  [below of=A]{$q_1$};

  \path (A)			edge[bend left,above]             node[right]{$b$} (B);
  \path (A)			edge[bend right,below]             node[left]{$a$} (B);
\end{tikzpicture}
\\
\text{Automaton of Alice} & \text{Automaton of Bob} & \text{Automaton of Carol}
\end{array}\]
\end{example}

Given a contract automaton $\aca A = \conf{\mathcal Q^n, \vec{q_0},
  \Lset^n, T, F}$ of rank $n$, usual definitions and constructions of
finite-state automata apply.
In particular,
\begin{itemize}
\item the configurations of $\aca A$ are pairs in ${\mathcal Q}^n
  \times (\Lset^n)^\ast$ of strings of $n$-tuples of labels and states
  of $\aca A$;
\item \emph{$\aca A$ moves from $(\vec{q},w)$ to $(\vec{q}',w')$},
  written $(\vec{q},w) \TRANSS{\vec{a}} (\vec{q}',w')$, iff $w =
  \vec{a}w'$ and $(\vec{q},\vec{a},\vec{q}') \in T$; we write
  $(\vec{q},w) \to (\vec{q}',w')$ when $\vec{a}$ is immaterial and
  $\vec{q} \TRANSS{\vec{a}} \vec{q}'$ when $w$ is immaterial;
\item  the \emph{language of $\mathcal{A}$} is $\mathscr{L}(\mathcal{A})=\{w \st
  (\vec{q_0},w) \to^* (\vec{q},\epsilon), \vec{q} \in F\}$
  where $\to^*$ is the reflexive and transitive closure of
  $\to$.
  As usual, $s_1 \TRANSS{\action_1\cdots \action_m } s_{m+1}$ shortens
  $s_1 \TRANSS{\action_1} s_2 \cdots s_m \TRANSS{\action_m} s_{m+1}$
  (for some $s_2,\ldots,s_m$) and $s \not \rightarrow$ iff for no
  $s'$ it is the case that $s \rightarrow s'$.
\end{itemize}


We now borrow from~\cite{BasileDF14} the product operation of contract
automata.
Given a finite set of contract automata, this operation basically
yields the contract automaton that interleaves all their transitions
while forcing synchronisations when two contract automata are in
states ready to \lq\lq handshake\rq\rq\ (\ie, they can fire
complementary request/offer actions).
\begin{definition}[Product]
  \label{def:prod}
  Let $\aca A_i = \conf{\mathcal Q^{n_i},\vec{q_0}_i,\Lset^{n_i}, T_i, F_i}$
  be contract automata of rank $n_i$, for $i \in \{1, \ldots, h\}$.
  The \emph{product of $\aca A_1,\ldots,\aca A_h$}, denoted as
  $\bigotimes_{i \in \{1,\ldots,h \}} \mathcal A_i $, is the contract
  automaton $\conf{\mathcal Q^n, \vec{q_0}, \Lset^n, T, F}$ of rank $n
  = n_1 + \ldots + n_h$ where:

  \begin{itemize}
  \item $\vec{q_0} = \vec{q_0}_1  \ldots  \vec{q_0}_h$
  \item $F = \{\vec q_1 \ldots \vec q_h \st \forall i \in 1 \ldots h
    \qst \vec{q}_i \in F_i \}$
  \item $T$ is the least subset of $\mathcal Q^n \times \Lset^n \times
    \mathcal Q^n$ such that $(\vec{q},\vec c, \vec{q}') \in T$ iff,
    letting $\vec q = \vec q_1 \ldots \vec q_h \in \mathcal Q^n$,
    \begin{description}
    \item[either] there are $1 \leq i < j \leq h$ such that
      $(\vec{q}_i,\vec a_i,\vec{q}'_i) \in T_i$, $(\vec q_j, \vec
      a_j,\vec{q}'_j) \in T_j$, $\vec a_i \matchrel \vec a_j$ and
      \[ \left\{\begin{array}{l}
          \ithel{\vec c} i = \vec a_i, \ \ithel{\vec c} j =
          \vec a_j, \text{ and } \ithel{\vec c} l = \blk^{n_l} \text{ for } l
          \in \{1,\ldots,h\} \setminus \{i,j\} 
          \\ \text{and} \\
          \vec{q}' = \vec q_1 \ldots \vec q_{i-1} \ \vec{q}'_i\ \vec q_{i+1}
          \ldots \ \vec q_{j-1} \ \vec{q}'_j \ \vec q_{j+1} \ldots \vec q_h
        \end{array}\right.
      \]
    \item[or] $\ithel {\vec c} i = \vec a_i$, $\ithel{\vec c} l =
      \blk^{n_l}$ for each $l\neq i \in \{1,\ldots,h\}$, and $\vec{q}'
      = \vec q_1 \ldots \vec q_{i-1} \vec{q}'_i \vec q_{i+1} \ldots
      \vec q_h$ when $(\vec{q}_i,\vec a_i,\vec{q}'_i) \in T_i$ and for
      all $j \neq i$ and $(\vec q_j, \vec a_j,\vec{q}'_j) \in T_j$ it
      does not hold that $\vec a_i \matchrel \vec a_j$.
    \end{description}
  \end{itemize}
\end{definition}

\begin{example}\label{ex:gameproduct}
The contract automaton below is the product of the contract automata
in Example~\ref{ex:game}.
\[
\begin{tikzpicture}[->,>=stealth',shorten >=1pt,auto,node distance=3cm,
                    semithick, every node/.style={scale=0.8}]
  \tikzstyle{every state}=[fill=white,draw=black,text=black]

  \node[initial,state] (A)                   	 {$\vec{q_0}$};
  
  \node[state] (B)  [right of=A]                  			 {$\vec{q_1}$};

  \node[state] (C)  [right of=B]                 			 {$\vec{q_2}$};

\node[state] (G)  [below of=A]{$\vec{q_6}$};

 \node[state] (D)  [right of=G]      			 {$\vec{q_3}$};

\node[state] (H)  [below of=G]{$\vec{q_7}$};

\node[state] (I)  [right of=H]{$\vec{q_8}$};

\node[state,accepting] (E)  [right of=I]{$\vec{q_4}$};

  \path (A)			edge[color=blue]             node[above]{$(\overline{a},a,\blk)$} (B)
					edge[color=blue]             node{\hspace{-30pt}\rotatebox{-45}{$(\blk,\overline{b},b)$}} (D)
					edge[left,color=red]             node{$(\overline{a},\blk,a)$} (G);
  \path (B)			edge [color=red]            node{$(\blk,\blk,a)$} (C)
				edge[color=blue]             node{\hspace{-20pt}\rotatebox{-65}{$(\blk,\overline{b},b)$}} (E);
  \path (C)			edge             node{$(\blk,\overline{b},\blk)$} (E);
\path (D)			edge [color=blue]   node{\hspace{-30pt}\rotatebox{-45}{$(\overline{a},a,\blk)$}} (E);


  \path (G)			edge [left]            node{$(\blk,a,\blk)$} (H)
				edge             node{\hspace{-30pt}\rotatebox{-45}{$(\blk,\overline{b},\blk)$}} (I);

\path (H)			edge[bend right,below]             node{$(\blk,\overline{b},\blk)$} (E);

\path (I)			edge             node{$(\blk,a,\blk)$} (E);
\end{tikzpicture}
\]
%
Notice that from the states $\vec{q_0}$ and $\vec{q_6}$ (where
participants can handshake) only match actions depart; offer and
request actions are not included in the product.
\end{example}

\begin{remark}
  Notice that the product in \defref{def:prod} is not associative; an
  alternative (but more complex) definition of associative product can
  be given by ``breaking'' existing matches when composing
  automata~\cite{BasileDF14}.
\end{remark}

Hereafter, we assume that all contract automata of rank $n > 1$
are the product of $n$ principals.
Also, we consider deterministic contract automata only.  
Such assumptions could be relaxed at the cost of adding some technical
intricacies.


\subsection{Communicating Machines}\label{sec:cm}



Communicating machines~\cite{BZ83} are a simple automata-model introduced to
specify and analyse systems made of agents interacting via
asynchronous message passing.
We adapt the original definitions and notation from~\cite{BZ83}
and~\cite{cf05} to our needs; in particular, the only relevant
difference with the original model is that we have to add the set of
final states.
Let $\PSet$ be a finite set of \emph{participants} (ranged over by
$\p$, $\q$, $\rr$, $\ptp s$, etc.)  and $\chanset \mmdef \ASET{\p\q \st
  \p,\q \in \PSet \text{ and }\p \neq \q}$ be the set of
\emph{channels}.

\begin{remark}
  The set $\PSet$ can be thought of as the set of integers $\{1,
  \ldots, n \}$ (and likewise for contract automata). However, we
  adopt a different notation to make the translation from contract
  automata to communicating machines clearer.
\end{remark}

The set of \emph{actions} is $\Act \mmdef \chanset \times (\ASigma)$
and it is ranged over by $\action$; we abbreviate $(\ptp{sr},\co{a})$
with $\PSEND{sr}{a}$ when $\co a \in \Oset$ (representing the
\emph{sending} of $a$ from machine $\s$ to $\rr$) and, similarly, we
shorten $(\ptp{sr},a)$ with $\PRECEIVE{sr}{a}$ when $a \in \Rset$ (representing the
\emph{reception} of $a$ by $\rr$).
\begin{definition}[CFSM]\label{def:cfsm2} 
  A \emph{communicating finite-state machine} is an automaton $M =
  (Q,q_0, \ASigma,\delta, F)$ where $Q$ is a finite set of {\em
    states}, $q_0\in Q$ is the \emph{initial} state, $\delta\
  \subseteq \ Q \times \Act \times Q$ is a set of \emph{transitions},
  and $F \subseteq Q$ is the set of final states.
  We say that $M$ is \emph{deterministic} iff for all states $q \in Q$
  and all actions $\action \in \Act$, if $(q,\action,q'),
  (q,\action,q'')\in \delta$ then $q' = q''$.
  Also, we write $\lgge{M} \subseteq \Act^{\ast}$ for the language on
  $\Act$ accepted by the automaton corresponding to machine $M$.
\end{definition}
We will consider only deterministic CFSMs. The notion of deterministic CFSMs
adopted here  differs from the
  standard one which requires that, for any
  state $q$, if $(q,\PSEND{sr}{a},q') \in \delta$ and
  $(q,\PSEND{sr}{b},q'')\in \delta$ then $a=b$ and $q'=q''$
  (see e.g.,~\cite{cf05}).
  The reason for the definition is to reflect the semantics of
  contract automata.

The communication model of CFSMs (cf. Definitions~\ref{def:cs}
and~\ref{def:rs}) is based on (unbounded) FIFO buffers - the channels
in $\chanset$ - used by participants to exchange messages.
To spare another syntactic category and cumbersome definitions, we
draw the messages appearing in the buffers of CFSMs from the set
of requests $\Rset$.
Recall that the set of participants $\PSet$ is finite.
\begin{definition}[Communicating systems]\label{def:cs}
  Given a CFSM $M_\p=(Q_\p, q_{0\p},\ASigma,\delta_\p,F_\p)$ for
  each $\p \in \PSet$, the tuple $S=(M_\p)_{\p\in\PSet}$ is a
  \emph{communicating system} (\emph{CS}).
  A \emph{configuration} of $S$ is a pair $s = \csconf q w$ where
  $\vec q = (q_\p)_{\p\in\PSet}$ with $q_\p\in Q_\p$ and where
  $\vec{w}=(w_{\p\q})_{\p\q\in\chanset}$ with $w_{\p\q}\in
  \Rset^\ast$; component $\vec q$ is the \emph{control state} and
  $q_\p\in Q_\p$ is the \emph{local state} of machine $M_\p$. 
  The \emph{initial configuration} of $S$ is $s_0 =
  \csconf{q_0}{\emptyword}$ with $\vec{q_0} = (q_{0\p})_{\p\in\PSet}$.
\end{definition}
Hereafter, we fix a machine $M_\p=(Q_\p,
q_{0\p},\ASigma,\delta_\p,F_\p)$ for each participant $\p \in \PSet$
and let $S=(M_\p)_{\p\in\PSet}$ be the corresponding system.


%

\begin{definition}[Reachable state]\label{def:rs}
  A configuration $s'=(\vec{q}';\vec{w}')$ is {\em reachable} from
  another configuration $s=(\vec{q};\vec{w})$ by {\em firing
    $\action$}, written $s \TRANSS{\action} s'$,
  if there is $a \in \Rset$ such that
  \begin{enumerate}
  \item either $\action= \PSEND{sr}{a}$ and $(\ithel {\vec q} \s,
    \action, \ithel {\vec q'} \s) \in \delta_{\s}$, $\ithel {\vec q'}
    \p = \ithel {\vec q} \p$ for all $\p \neq \s$, and $\ithel {\vec
      w'} {\s \rr} = \ithel {\vec w} {\s \rr} \cdot a$ and, for all
    ${\p\q} \neq \s\rr$, $\ithel {\vec w'}{\p\q} = \ithel {\vec w}
    {\p\q}$
  \item or $\action= \PRECEIVE{sr}{a}$ and $(\ithel {\vec q} \rr,
    \action, \ithel {\vec q'} \rr)\in \delta_\rr$, $\ithel {\vec q'}
    \p = \ithel {\vec q} \p$ for all $\p \neq \rr$, and $\ithel {\vec
      w'} {\s \rr} = a \cdot \ithel {\vec w} {\s \rr}$ and $\ithel {\vec
      w'} {\p\q} = \ithel {\vec w} {\p\q}$ for all ${\p\q} \neq
    \ptp{sr}$.
  \end{enumerate}
  We write $s \TRANS{} s'$ for $\exists \action \qst s
  \TRANSS{\action} s'$ and denote with $\RR$ the reflexive and
  transitive closure of $\R$.
The set of \emph{reachable configurations of $S$} is $\RS(S)=\ASET{
  s \st s_0 \RR s}$.
A sequence of transitions is
\emph{$k$-bounded} if no channel of any intermediate configuration
on the sequence contains more than $k$ messages. 
%
\end{definition}
Condition (1) in \defref{def:rs} puts the content $a$ on a channel
$\ptp{sr}$, while (2) gets the content $a$ from $\ptp{sr}$.


\subsection{Notational Synopsis}

To avoid their continuous repetition, through the paper we assume
fixed a contract automaton $\aca A = \conf{\mathcal Q^n, \vec{q_0},
  \Lset^n, T, F}$ of rank $n$.

For readability we summarise the notations introduced so far in the
following table.
  \[
  \begin{array}{l@{\quad}l}
    X^\ast & \text{set of finite words on a set $X$; $\emptyword$ is the empty word}
    \\
    \ithel w i & \text{the $i$-th symbol of $w$}
    \\
    |w| & \text{ the length of $w$}
    \\
    x^n \text{ (resp. $x^\ast$)} & \text{$x$ concatenated $n$-times (resp. arbitrarily many) with itself}
    \\
    \vec x \text{ or } (x_i)_{1 \leq i \leq n} & \text{ indexed tuples}
    \\
    \Lset & \text{labels (ranged over by $a$, $b$, $c$, etc.)}
    \\
    \Rset & \text{request labels}
    \\
    \Oset & \text{offer labels}
    \\
    \blk \not\in \Rset \cup \Oset & \text{idle label}
    \\
    \aca A & \text{contract automata of rank $n$}
    \\
    \PSet & \text{set of participants (ranged over by
$\p$, $\q$, $\ptp i$, $\ptp j$,$\ptp A$, $\ptp B$,$\ptp C$, etc.)}
    \\
    \chanset & \text{set of channels (ranged over by $\p\q $)}
    \\
    M_\p & \text{communicating machine of participant $\p$}
    \\
    S & \text{a system of communicating machines}
  \end{array}
  \]

Finally, we assume that the states of any automaton/machine are build
out of a universe $\mathcal Q$ (of states).


\section{Enforcing Agreement}\label{sec:agreement}
This section introduces a new notion of agreement on contract automata
- called \emph{strong agreement} - that elaborates the notions of
\emph{agreement} and \emph{weak agreement} introduced
in~\cite{BasileDF14}.
The three notions differ on the conditions for the fulfillment of an
interaction between different principals.
Briefly, an \emph{agreement} exists if all the requests, but not necessarily all the offers, are satisfied
synchronously. Intuitively, this means that the orchestrator ``simultaneously''
guarantees two participants that their complementary actions are matched. 
Instead, a \emph{weak agreement} exists when request actions can
be performed \lq\lq on credit\rq\rq.
In other words, a computation yelds weak agreement when the fulfillment
of a request action can happen after the action has been taken.
Intuitively, this corresponds to an asynchronous communication
admitting actions taken on credit provided that obligations will be
honored later on.

Here, we focus on \emph{strong agreement}, which strengthens the
previous notion of agreement by requiring the fulfillment of all offers
\emph{and} requests in a synchronous way.
In Section \ref{sec:mapping} we will show how this condition corresponds to 
interactions between communicating machines.

\begin{definition}[Strong Agreement and Safety]\label{def:safety}
  A \emph{strong agreement on $\Lset$} is a finite (non-empty)
  sequence of match actions.
  We let $\mathfrak{Z}$ to denote the set of all strong agreements on $\Lset$.

  A contract automaton ${\aca A}$ is \emph{strongly safe} if
  $\mathscr{L}({\aca A})\subseteq \mathfrak{Z}$, otherwise it is
  \emph{strongly unsafe}.
  We say that ${\aca A}$ \emph{admits strong agreement} when
  $\mathscr{L}(\aca A) \cap \mathfrak{Z} \neq \emptyset$.
\end{definition}
Note that $\emptyword$ does not belong to $\mathfrak Z$; the reason is
that $\emptyword$ would not be an interesting agreement because it
does not require any interaction (neither with the controller nor
between principals).
For this we require that the initial states of contract automata are
not accepting states.

We show how to generate a strongly safe composition of contracts with
an approach borrowed by the supervisory control theory for discrete
event systems~\cite{Cassandras2006}.
In this  theory, discrete event systems are basically automata where
accepting states represent the successful termination of a few tasks
while \emph{forbidden states} are those that should not be traversed
in ``good'' computations.
The purpose is then to synthesize a controller that enforces this
property.
The supervisory control theory distinguishes between
\emph{controllable} events (those events that the controller can
disable) and \emph{uncontrollable} events (those that are always
enabled).
Moreover, the theory partitions events in \emph{observable} and
\emph{unobservable}; the latter being a subset of uncontrollable
events.
It is known that if all events are observable then a maximally
permissive controller exists that never blocks a good computation~\cite{Cassandras2006}.

Since the behaviors that we want to enforce in ${\aca A}$ are exactly
those traces labeled by words in \linebreak $\mathfrak{Z} \cap \mathscr{L}({\aca
  A})$, we specialise the notions of supervisory control theory by
defining
\begin{itemize}
\item observable events to be all offer, request, and
  match actions;
\item forbidden events to be non-match actions.
\end{itemize}

\begin{definition}[Controller]\label{def:controller}
  A \emph{(strong) controller} of $\aca A$ is a contract automaton
  $KS_{{\aca A}}$ such that \linebreak $\mathscr{L}(KS_{{\aca A}}) \subseteq
  \mathfrak{Z} \cap \mathscr{L}(\aca A)$.
  The \emph{most permissive (strong) controller } (MPC) of $\aca A$ is the
  controller $KS_{{\aca A}}$ such that
  $\mathscr{L}(KS'_{{\aca A}}) \subseteq
  \mathscr{L}(KS_{{\aca A}})$ for all $KS'_{\aca A}$ controllers of
  $\aca A$.
\end{definition}
Note that the most permissive controller is unique up-to language equivalence.  

\begin{example}\label{ex:gameMPC}
  The MPC of the contract automaton in Example~\ref{ex:gameproduct} consists
of the states $\vec q_0$,  $\vec q_1$,  $\vec q_3$, and  $\vec q_4$
with transitions 
$(\vec q_0,(\overline a, a, \blk ) , \vec q_1)$,
$(\vec q_3,(\overline a, a, \blk ) , \vec q_4)$,
$(\vec q_0,(\blk, \overline b, b) , \vec q_3)$, and
$(\vec q_1,(\blk, \overline b, b) , \vec q_4)$.
\end{example}
\begin{proposition}
  If $KS_{{\aca A}}$ is the most permissive controller of ${\aca A}$ then
  $\mathscr{L}(KS_{{\aca A}})=\mathfrak{Z} \cap
  \mathscr{L}({\aca A})$.
\end{proposition}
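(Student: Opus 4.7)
The plan is to prove the two inclusions separately. The inclusion $\mathscr{L}(KS_{\aca A}) \subseteq \mathfrak{Z} \cap \mathscr{L}(\aca A)$ is immediate from \defref{def:controller}, since by definition every strong controller has language contained in $\mathfrak{Z} \cap \mathscr{L}(\aca A)$.

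For the converse inclusion $\mathfrak{Z} \cap \mathscr{L}(\aca A) \subseteq \mathscr{L}(KS_{\aca A})$, the strategy is to exhibit a specific strong controller $K^\star$ whose language is exactly $\mathfrak{Z} \cap \mathscr{L}(\aca A)$, and then invoke the maximality of $KS_{\aca A}$. Concretely, I define $K^\star = \conf{\mathcal Q^n, \vec{q_0}, \Lset^n, T^\star, F}$ where $T^\star \subseteq T$ is obtained by discarding every transition $(\vec q, \vec a, \vec q') \in T$ whose label $\vec a$ is not a match action (i.e., is merely a pure request or a pure offer). Since $T^\star \subseteq T$, any word accepted by $K^\star$ is accepted by $\aca A$; moreover, every transition fired in $K^\star$ is labelled by a match action, so every word in $\mathscr{L}(K^\star)$ is a nonempty sequence of match actions, hence an element of $\mathfrak{Z}$ (using the standing assumption that initial states are not accepting, so $\emptyword \notin \mathscr{L}(K^\star)$). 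Conversely, any $w \in \mathfrak{Z} \cap \mathscr{L}(\aca A)$ is an accepted run of $\aca A$ consisting entirely of match actions, so all its transitions survive in $T^\star$, giving $w \in \mathscr{L}(K^\star)$. This yields $\mathscr{L}(K^\star) = \mathfrak{Z} \cap \mathscr{L}(\aca A)$.

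Having $K^\star$ in hand, I verify that it is a strong controller: the inclusion $\mathscr{L}(K^\star) \subseteq \mathfrak{Z} \cap \mathscr{L}(\aca A)$ we just established is exactly the condition of \defref{def:controller}. Since $KS_{\aca A}$ is the \emph{most permissive} controller, the defining property gives $\mathscr{L}(K^\star) \subseteq \mathscr{L}(KS_{\aca A})$, hence $\mathfrak{Z} \cap \mathscr{L}(\aca A) \subseteq \mathscr{L}(KS_{\aca A})$. Combining this with the first inclusion yields the desired equality.

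The only delicate point is checking that $K^\star$ really is a well-formed contract automaton according to \defref{def:contract}, and, more subtly, that it fits the standing assumption that contract automata of rank $n > 1$ arise as a product of $n$ principals. The first concern is not an issue: removing transitions preserves the structural constraints on labels and idle components. The second concern, if one wants to be strict about it, is handled by noting that maximality in \defref{def:controller} is stated in terms of language inclusion, so it suffices that $K^\star$ be some contract automaton of the same rank; reachability/productivity trimming can further be applied without altering the language. This is the step I would flag as the main place where hidden hypotheses need to be unwound, but it is otherwise routine.
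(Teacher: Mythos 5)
Your proof is correct, and its skeleton is the same as the paper's: both establish the easy inclusion from \defref{def:controller} and then obtain the converse by producing a controller whose language is all of $\mathfrak{Z} \cap \mathscr{L}(\aca A)$ and appealing to maximality of $KS_{\aca A}$. The difference is in how that witness is obtained. The paper argues by contradiction and simply asserts the existence of a contract automaton accepting $\mathfrak{Z} \cap \mathscr{L}(\aca A)$, on the grounds that it is an intersection of regular languages and all events are controllable, deferring to the supervisory-control literature. You instead construct the witness explicitly as the sub-automaton $K^\star$ of $\aca A$ obtained by deleting all non-match transitions, and verify directly that $\mathscr{L}(K^\star) = \mathfrak{Z} \cap \mathscr{L}(\aca A)$ (using the standing assumption that initial states are not accepting to exclude $\emptyword$). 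This is more self-contained and elementary, and in fact your $K^\star$ is precisely the automaton $\mathcal K$ that the paper introduces immediately afterwards in the MPC lemma (Lemma~\ref{lem:controller}), whose proof essentially repeats your language computation; so your argument anticipates and subsumes part of that lemma. Your closing caveat --- that $K^\star$ need not literally be a product of $n$ principals --- is well taken but harmless, since the paper's own controller constructions already abandon that standing assumption and maximality is phrased purely in terms of language inclusion.
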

\begin{proof}
  By contradiction, assume $\mathscr{L}(KS_{{\aca A}})\subset
  \mathfrak{Z} \cap \mathscr{L}({\aca A})$.
  Since $\mathfrak{Z} \cap \mathscr{L}({\aca A})$ is the
  intersection of two regular languages and all actions are
  controllable, there exists a contract
  automaton $KS'_{{\aca A}}$ accepting it (cf. \cite{Cassandras2006}).
  By definition, $KS'_{{\aca A}}$ is a controller of $\aca A$ strictly
  containing 
  $\mathscr{L}(KS_{{\aca A}})$, contradicting the hypothesis that
  $\mathscr{L}(KS_{{\aca A}})$ is the  most permissive controller.
  \qed
\end{proof}

A state $\vec q$ of a contract automaton $\aca A$ is called
\emph{redundant} if, and only if, from $\vec q$ no accepting
state of $\aca A$ can be reached.

\begin{lemma}[MPC]
\label{lem:controller}
A contract automaton is the most permissive controller of $\aca
A$ if, and only if, it is language-equivalent to
\[
KS_{\aca A} \mmdef \conf{\mathcal Q^n, \vec{q_0}, \Lset^n,
  T' \setminus \{ (\vec{q},a,\vec{q}') \st \vec{q} \text{ or }
  \vec{q}' \text{ is redundant in } \mathcal K\}, F}
\]
where $\mathcal K = \conf{\mathcal Q^n, \vec{q_0}, \Lset^n,\{t \in T \st t \text{ is a match transition}\}, F}$
is the sub-automaton of $\aca A$ consisting
of the match transitions of $\aca A$ only.
\end{lemma}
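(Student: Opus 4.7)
The plan is to use the preceding proposition, which characterises the language of any most permissive controller as $\mathfrak{Z} \cap \mathscr{L}(\aca A)$, together with the observation (just after Definition~\ref{def:controller}) that MPCs are unique up to language equivalence. Consequently, to establish the ``iff'' it suffices to prove that
\[
\mathscr{L}(KS_{\aca A}) \;=\; \mathfrak{Z} \cap \mathscr{L}(\aca A),
\]
since then $KS_{\aca A}$ is itself an MPC, and any other automaton is an MPC exactly when its language coincides with that of $KS_{\aca A}$.

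For the inclusion $\mathscr{L}(KS_{\aca A}) \subseteq \mathfrak{Z} \cap \mathscr{L}(\aca A)$, I would argue as follows. By construction, the transitions of $KS_{\aca A}$ are a subset of the transitions of $\mathcal K$, which in turn are all match transitions of $\aca A$. Hence every accepting run in $KS_{\aca A}$ is also an accepting run in $\aca A$, showing $\mathscr{L}(KS_{\aca A}) \subseteq \mathscr{L}(\aca A)$. Moreover, every label along such a run is a match action, and the run has length at least one because (as the paper stipulates just before the supervisory-control discussion) initial states of contract automata are assumed not to be accepting. Thus each accepted word is a non-empty sequence of match actions, i.e., an element of $\mathfrak{Z}$.

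For the converse inclusion, take any $w \in \mathfrak{Z} \cap \mathscr{L}(\aca A)$ and fix an accepting run $\vec q_0 \TRANSS{\vec a_1} \vec q_1 \TRANSS{\vec a_2} \cdots \TRANSS{\vec a_m} \vec q_m$ of $\aca A$ with $\vec q_m \in F$ and $w = \vec a_1 \cdots \vec a_m$. Because $w \in \mathfrak Z$, every $\vec a_i$ is a match action, so each transition of the run belongs to $\mathcal K$. Now observe that, for every index $i$, the suffix $\vec q_i \TRANSS{\vec a_{i+1}} \cdots \TRANSS{\vec a_m} \vec q_m$ is a run of $\mathcal K$ reaching the accepting state $\vec q_m$, so no $\vec q_i$ is redundant in $\mathcal K$. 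By the definition of $KS_{\aca A}$, none of the transitions of the run are removed, so the very same sequence is an accepting run of $KS_{\aca A}$, whence $w \in \mathscr{L}(KS_{\aca A})$.

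The only real subtlety I expect is the bookkeeping around the non-emptiness condition in $\mathfrak Z$ and the fact that pruning edges incident to redundant states does not accidentally delete transitions that lie on some accepting path; the latter is handled by the observation above, namely that every state on a successful match-run has, by construction, a continuation to an accepting state and is therefore not redundant in $\mathcal K$. Once these two inclusions are in place, combining them with the proposition and the uniqueness of the MPC up to language equivalence yields the claimed characterisation.
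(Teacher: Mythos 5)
Your proof is correct and follows essentially the same route as the paper's: both establish $\mathscr{L}(KS_{\aca A}) = \mathfrak{Z} \cap \mathscr{L}(\aca A)$ by observing that $KS_{\aca A}$'s transitions are match transitions of $\aca A$ (giving one inclusion, with non-emptiness from the non-accepting initial state) and that every transition on an accepting run labelled by a word of $\mathfrak Z$ is a match transition between states that are non-redundant in $\mathcal K$, hence survives the pruning (giving the other). Your version is marginally tidier in that it argues the second inclusion directly rather than by contradiction and makes explicit the reduction of the ``iff'' to the preceding proposition plus uniqueness of the MPC up to language equivalence, which the paper leaves implicit.
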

\begin{proof}
  By construction, the transitions of $KS_{{\aca A}}$ are a subset
  of the transitions of $\aca A$, hence $\mathscr{L}(KS_{\aca
    A}) \subseteq \mathfrak{Z} \cap \mathscr{L}({\aca A})$.
  Therefore  $KS_{{\aca A}}$ is a controller of $\aca A$ and we have just
  to prove that $\mathscr{L}(KS_{{\aca A}}) = \mathfrak{Z} \cap
  \mathscr{L}({\aca A})$.
  We proceed by contradiction.

  Let $w \in (\mathfrak{Z} \cap \mathscr{L}({\aca A}) )\setminus
 \mathscr{L}(KS_{\aca A})$.
  Since $\mathfrak Z$ does not contain the empty string we have $w \neq \emptyword$ and there must be a
  transition $t = (\vec q, \vec a, \vec q')$ not in $KS_{\aca A}$ in
  the accepting path of $w$ (which is unique since we consider
  deterministic contract automata only), otherwise $w \in  \mathscr{L}(KS_{\aca A})$. We know that  $\vec a$ is a match action because $w \in \mathfrak{Z}$,  and $\vec{q}$,$\vec{q}'$ are not redundant states of $\aca A$ because the transition belongs to an accepting path. Hence there must be $t \in KS_{\aca A}$, since  by construction  match transitions between non-redundant states are in $KS_{\aca A}$.
  \qed
\end{proof}

\begin{example}
The MPC of Example \ref{ex:gameMPC} is obtained from the
CA in Example \ref{ex:gameproduct} by applying the 
construction of Lemma \ref{lem:controller}.
\end{example}

The \emph{controlled system} of a contract automaton $\aca A$
identifies the match transitions of $\aca A$ and those transitions
that lead ``outside'' of the controller; for this we use a
distinguished state $\bot \not \in \mathcal Q^n$ (for any $n$) in the
following definition.
\begin{definition}[Controlled system]\label{def:controlled}
  Let $KS_{\aca A} = \conf{\mathcal Q^n, \vec{q_0},
    \Lset^n, T' \subseteq T, F}$ be the MPC of
  ${\aca A}$ as computed in Lemma~\ref{lem:controller}.
  The \emph{controlled system of $\aca A$ under $KS_{\aca A}$} is
  defined as the automaton \linebreak $KS_{{\aca A}}/{\aca A} = \conf{\mathcal
    Q^n \cup \{\bot\}, \vec{q_0}, \Lset^n, T'', F}$ such that
  \[
  T'' \quad = \quad T' \quad \cup \quad \{ (\vec q, \vec a, \bot) \st \vec q
  \text{ reachable from $\vec{q_0}$ in $KS_{\aca A}$ and }
  \exists \vec{q}' \in \mathcal Q^n \qst (\vec{q},\vec{a},\vec{q}') \in T \setminus
  T'\}\]
\end{definition}
\begin{example}
The controlled system of the CA in Example \ref{ex:gameproduct} is obtained
by adding the transitions \linebreak
$(\vec q_1,(\blk,\blk,a),\vec q_2),
(\vec q_0, (\overline a, \blk, a), \vec q_6)$
to the MPC of Example \ref{ex:gameMPC}.
\end{example}
It is worth remarking that the transitions reaching $\bot$ in the
controlled system of $\aca A$ identify the start of the computations
in $\aca A$ which lead to violations of strong agreement.

In the next definition, we introduce a notion of strong liability,
to single out the principals that are potentially responsible of the
divergence from the expected behaviour.
\begin{definition}[Strong Liability]
\label{def:culpability}
Given a controlled system $KS_{{\aca A}}/{\aca A}$,
the set of \emph{liable} participants on a trace 
$w \in \mathscr{L}(\aca A)$  is given by:
\[
\liable{KS_{{\aca A}}/{\aca A},w} = \{1 \leq \ptp i \leq n \mid (\vec{q_0}, w) \to^* (\vec
q,\vec{a}w') \to (\bot,w) \ in \ KS_{{\aca A}}/{\aca A}, \ithel {\vec a}{\ptp  i} \neq \blk \}
\]
%
%
The \emph{potentially liable principals} in $KS_{{\aca A}}/{\aca
  A}$ are $\liable{KS_{{\aca A}}/{\aca A}} \mmdef \bigcup_{w \in
  \mathcal{L}(\aca A)}\liable{KS_{{\aca A}}/{\aca A},w}$.
  
  We let $\tliable{KS_{{\aca A}}/{\aca A}}$ to denote the set of
transitions of $\aca A$ that make principals liable.
\end{definition}
Note that the transition labelled
by $\vec a$ in Definition~\ref{def:culpability} is the first which diverges from the expected path
(since, by Definition~\ref{def:controlled}, state $\bot$ does not have
outgoing transitions).
Indeed a liable index identifies a principal that fires an action
taking the computation away from agreements.

\begin{example}\label{ex:gameliable}
  The liable indexes of the contract automaton in
  Example~\ref{ex:gameproduct} are $1$ and $3$, corresponding to Alice
  and Carol respectively; the transitions that make them liable are
  respectively $\big(\vec q_0 , \ (\overline a, \blk, a),\ \vec
  q_6\big)$ and $\big(\vec q_1, \ (\blk, \blk, a), \ \vec q_2\big)$.
  The former liable transition is a match  that
  leads to a non-match transition.
\end{example}

Note that labels allow us to track participants firing
actions so to find (the indexes of) the liable principals.
Our aim is to restrict the behaviour of principals so that they follow
only the traces of the automaton which lead to strong agreement, while
avoiding the others.


\section{From Contract Automata to Communicating Machines}\label{sec:mapping}
%
The translation of a principal into a communicating machine is
conceptually straightforward.
Indeed, the translation just yields a machine isomorphic to a principal in the composed contract automaton; 
the only difference are the labels.
To account for the ``openness'' nature of contract automata - where
principals can fire transitions not matched by other principals - in
Definition~\ref{def:tr} below we use the '$-$' symbol representing a
special (``anonymous'') participant distinguished by the participants
corresponding to the principals and playing the role of the
environment.
For this reason, we will assume from now on that actions in $\Act$ are
built on $\chanset \cup \{-\}$.

\begin{definition}[Translation]\label{def:tr}
  The translation $\sem{\vec{a}}{\ptp p} \in \Act$ of an action $\vec{a}$ on $\Lset^n$
  respect to a participant $\ptp p$ (with $1 \leq \ptp p \leq n$) is defined as:
  \[
  \sem{\vec{a}}{\ptp p} = \left\{ 
    \begin{array}{ll} 
      \PSEND{ij}{a}& \text{if $\vec a$ is a match action and $i$
        and $j$ are such that $\ithel{\vec{a}}{ i} \in \Oset$ and
        $\ithel{\vec{a}}{j} \in \Rset$ and $\ptp p=\ptp i$}
      \\
      \PRECEIVE{ij}{a}& \text{if $\vec a$ is a match action and
        $i$ and $j$ are such that $\ithel{\vec{a}}{ i} \in \Oset$
        and $\ithel{\vec{a}}{j} \in \Rset$ and $\ptp p=\ptp j$}
      \\
      \PSEND{i-}{a}& \text{if $\vec a$ is an offer action and
        $i$ is such that $\ithel{\vec{a}}{ i} \in \Oset$ and $\ptp
        p=\ptp i$} \\ \PRECEIVE{-j}{a}& \text{if $\vec a$ is a
        request action and $j$ is such that $\ithel{\vec{a}}{j} \in
        \Rset$ and $\ptp p=\ptp j$}
	\\
	\epsilon & \text{otherwise}
    \end{array} 
  \right.
  \]

  The \emph{translation of $\aca A$ to a CFSM} is given
  by the map
  \[
  \sem{\aca A}{\ptp p} \mmdef
  \conf{\mathcal Q,
    \ithel {\vec{q_0}} p, \Act,
    \{(\ithel {\vec q} p, \sem{\vec a}{\ptp p},
    \ithel {\vec q'} p) \st (\vec
    q, \vec a, \vec q') \in T \text{ and } \sem{\vec a}{\ptp p} \neq
    \emptyword\}, F}
  \]

  We denote with $S(\aca A) = (\sem{\aca A}{\ptp p})_{\ptp p
    \in \{1, \ldots, n \}}$ the communicating system obtained by translating the
  contract automaton $\aca A$.
\end{definition}

Given $\varphi \in \big(\Lset^n\big)^\ast$, 
%
we define
\[
\llbracket \varphi \rrbracket \mmdef \left\{ 
  \begin{array}{ll} 
    \PSEND{ij}{a} \ \PRECEIVE{ij}{a}\llbracket \varphi' \rrbracket
    & \text{if $\varphi=\vec{a}\varphi'$ and $\vec a$ is a match action
      on $a$ with $\ithel{\vec{a}}{ i} \in \Oset$ and $\ithel{\vec{a}}{j} \in \Rset$}
    \\
    \PSEND{i-}{a}\llbracket \varphi' \rrbracket &
    \text{if $\varphi=\vec{a}\varphi'$ and $\vec a$ is an offer action
      on $a$ with $\ithel{\vec{a}}{ i} \in \Oset$ }
    \\
    \PRECEIVE{-j}{a}\llbracket \varphi' \rrbracket &
    \text{if $\varphi=\vec{a}\varphi'$ and $\vec a$ is a request action
      on $a$ with $\ithel{\vec{a}}{j} \in \Rset$}
    \\
    \emptyword & \text{if $\varphi=\emptyword$}
    \\
    \text{undefined} & \text{otherwise}
  \end{array} 
\right. 
\]
\begin{example}\label{ex:translation}
Consider the following principal CAs:
\[
\ptp A   \qquad \qquad \qquad \qquad \qquad \qquad  
\ptp B    \qquad \qquad \qquad \qquad \qquad \qquad  
\ptp C
\]
\[
\begin{tikzpicture}[->,>=stealth',shorten >=1pt,auto,node distance=2.5cm,
                    semithick, every node/.style={scale=0.8}]
  \tikzstyle{every state}=[fill=white,draw=black,text=black]

  \node[initial,state] (A)                   	 {${q_0}_1$};
  
  \node[state, accepting] (B)  [right of=A]              	{${q_1}_1$};
  
  \node[state] (C)	[below of=A]         {${q_2}_1$};

  \path (A)			edge              node{$\overline{a}$} (B)
					edge              node[left]{$\overline b$} (C)
        (C)			edge	[bend right]			  node{$\overline{a}$}(B)
        	(B)			edge[loop above]	  node{$\overline{a}$}(B);				
					
\end{tikzpicture} \quad
\begin{tikzpicture}[->,>=stealth',shorten >=1pt,auto,node distance=2.5cm,
                    semithick, every node/.style={scale=0.8}]
  \tikzstyle{every state}=[fill=white,draw=black,text=black]

  \node[initial,state] (A)                   	 {${q_0}_2$};
  
  \node[state, accepting] (B)  [right of=A]              	{${q_1}_2$};
  
  \node[state] (C)	[below of=A]         {${q_2}_2$};

  \path (A)			edge              node{$a$} (B)
					edge              node[left]{$\overline c$} (C)
        (C)			edge	 [bend right]			  node{$a$}(B)
        	(B)			edge[loop above]	  node{$a$}(B);				
					
\end{tikzpicture} \quad
\begin{tikzpicture}[->,>=stealth',shorten >=1pt,auto,node distance=2.5cm,
                    semithick, every node/.style={scale=0.8}]
  \tikzstyle{every state}=[fill=white,draw=black,text=black]

  \node[initial,state] (A)                   	 {${q_0}_3$};
  
  \node[state,accepting] (B)  [right of=A]        {${q_1}_3$};
  
  \node[state,accepting] (C)	[below of=B]         {${q_2}_3$};
   
  \node[state,accepting] (D)	[below of=A]         {${q_3}_3$};

  \path (A)			edge              node{$b$} (B)
  					edge				  node{$c$} (D)
        (B)			edge				  node{$c$}(C)
        	(D)			edge				  node{$b$}(C);				
					
\end{tikzpicture}
\]
\\
\\
\\
The product is the contract automaton below with initial state $\vec q_0 = \langle {q_0}_1,{q_0}_2,{q_0}_3 \rangle$
\[
\ptp A \otimes \ptp B \otimes \ptp C
\]
\[
\begin{tikzpicture}[->,>=stealth',shorten >=1pt,auto,node distance=2.5cm,
                    semithick, every node/.style={scale=0.8}]
  \tikzstyle{every state}=[fill=white,draw=black,text=black]

  \node[state,accepting] (L)                   	 {$\vec{q_{9}}$};
  \node[state] (H) [right of=L]                  	 {$\vec{q_7}$};
  \node[state] (A) [right of=H]                   	 {$\vec{q_0}$};
  \node[state] (B) [right of=A]                  	 {$\vec{q_1}$};
  \node[state,accepting] (C)  [right of=B]                 	 {$\vec{q_2}$};
  \node[state,accepting] (M) [below of=L]                  	 {$\vec{q_{10}}$};
  \node[state,accepting] (I) [right of=M]                  	 {$\vec{q_8}$};
  \node[state] (D)        [right of=I]           	 {$\vec{q_3}$};
  \node[state] (E)   [right of=D]                	 {$\vec{q_4}$};
  \node[state,accepting] (F)  [right of=E]                 	 {$\vec{q_5}$};
  \node[state,accepting] (G)  [below of=E]                 	 {$\vec{q_6}$};		
  
  \path (A)			edge              node[above]{$(\overline a, a, \blk)$} (H)
  					edge              node{$(\blk,\overline c, c)$} (D)
  					edge              node{$(\overline b,\blk,b)$} (B);
  \path (B)			edge              node{$(\overline a, a, \blk)$} (C)
  					edge              node{$(\blk,\overline c,c)$} (E);
  \path (C)			edge [loop above]         node{$(\overline a, a, \blk)$} (C)
  					edge              node{$(\blk , \blk, c)$} (F);
  \path (D)			edge              node{$(\overline b, \blk, b)$} (E)
  					edge              node[left]{$(\overline a, a, \blk)$} (G);
  \path (E)			edge              node{$(\overline a, a, \blk)$} (F);
  \path (F)			edge [loop right]             node{$(\overline a, a, \blk)$} (F);
  \path (G)			edge              node[right]{$(\blk,\blk,b)$} (F)
  					edge [loop below]             node{$(\overline a, a, \blk)$} (G);
  \path (H)			edge [loop above]             node{$(\overline a, a, \blk)$} (H)
  					edge              node[above]{$(\blk, \blk, b)$} (L)
  					edge              node{$(\blk , \blk, c)$} (I);
  \path (I)			edge  [loop below]            node{$(\overline a, a, \blk)$} (H)
  					edge 	          node{$(\blk, \blk, b)$} (M);
  \path (L)			edge  [loop above] node{$(\overline a, a, \blk)$} (L)
  					edge              node{$(\blk, \blk ,c)$} (M);
  \path (M)			edge [loop below]             node{$(\overline a, a, \blk)$} (M);				
\end{tikzpicture}
\]
By applying Lemma \ref{lem:controller} on the product automaton we obtain the MPC:
\[
KS_{\ptp A \otimes \ptp B \otimes \ptp C}
\]
\[
\begin{tikzpicture}[->,>=stealth',shorten >=1pt,auto,node distance=2.5cm,
                    semithick, every node/.style={scale=0.8}]
  \tikzstyle{every state}=[fill=white,draw=black,text=black]

  \node[state, initial] (A) [right of=H]                   	 {$\vec{q_0}$};
  \node[state] (B) [right of=A]                  	 {$\vec{q_1}$};
  \node[state,accepting] (C)  [right of=B]                 	 {$\vec{q_2}$};
  \node[state] (D)        [right of=I]           	 {$\vec{q_3}$};
  \node[state] (E)   [right of=D]                	 {$\vec{q_4}$};
  \node[state,accepting] (F)  [right of=E]                 	 {$\vec{q_5}$};
  \node[state,accepting] (G)  [left of=D]                 	 {$\vec{q_6}$};		
  
  \path (A)			
  					edge              node{$(\blk,\overline c, c)$} (D)
  					edge              node{$(\overline b,\blk,b)$} (B);
  \path (B)			edge              node{$(\overline a, a, \blk)$} (C)
  					edge              node{$(\blk,\overline c,c)$} (E);
  \path (C)			edge [loop above]         node{$(\overline a, a, \blk)$} (C); 
  \path (D)			edge              node{$(\overline b, \blk, b)$} (E)
  					edge              node[above]{$(\overline a, a, \blk)$} (G);
  \path (E)			edge              node{$(\overline a, a, \blk)$} (F);
  \path (F)			edge [loop right]             node{$(\overline a, a, \blk)$} (F);
  \path (G)			
  					edge [loop above]             node{$(\overline a, a, \blk)$} (G);				
\end{tikzpicture}
\]

The translation of Definition \ref{def:tr} yelds the CMs:

\[
\sem{KS_{\ptp A \otimes \ptp B \otimes \ptp C}}{\ptp A}   \qquad \qquad \qquad \qquad \qquad   
\sem{KS_{\ptp A \otimes \ptp B \otimes \ptp C}}{\ptp B}    \qquad \qquad \qquad \qquad \qquad   
\sem{KS_{\ptp A \otimes \ptp B \otimes \ptp C}}{\ptp C}
\]

\[
\begin{tikzpicture}[->,>=stealth',shorten >=1pt,auto,node distance=2.5cm,
                    semithick, every node/.style={scale=0.8}]
  \tikzstyle{every state}=[fill=white,draw=black,text=black]

  \node[initial,state] (A)                   	 {${q_0}_1$};
  
  \node[state, accepting] (B)  [right of=A]              	{${q_1}_1$};
  
  \node[state] (C)	[below of=A]         {${q_2}_1$};

  \path (A)			edge              node{$\PSEND{AB}{a}$} (B)
					edge              node[left]{$\PSEND{AC}{b}$} (C)
        (C)			edge	[bend right]			  node{$\PSEND{AB}{a}$}(B)
        	(B)			edge[loop above]	  node{$\PSEND{AB}{a}$}(B);				
					
\end{tikzpicture} \quad
\begin{tikzpicture}[->,>=stealth',shorten >=1pt,auto,node distance=2.5cm,
                    semithick, every node/.style={scale=0.8}]
  \tikzstyle{every state}=[fill=white,draw=black,text=black]

  \node[initial,state] (A)                   	 {${q_0}_2$};
  
  \node[state, accepting] (B)  [right of=A]              	{${q_1}_2$};
  
  \node[state] (C)	[below of=A]         {${q_2}_2$};

  \path (A)			edge              node{$\PRECEIVE{AB}{a}$} (B)
					edge              node[left]{$\PSEND{BC}{c}$} (C)
        (C)			edge	[bend right]			  node{$\PRECEIVE{AB}{a}$}(B)
        	(B)			edge[loop above]	  node{$\PRECEIVE{AB}{a}$}(B);			
					
\end{tikzpicture} \quad
\begin{tikzpicture}[->,>=stealth',shorten >=1pt,auto,node distance=2.5cm,
                    semithick, every node/.style={scale=0.8}]
  \tikzstyle{every state}=[fill=white,draw=black,text=black]

  \node[initial,state] (A)                   	 {${q_0}_3$};
  
  \node[state,accepting] (B)  [right of=A]        {${q_1}_3$};
  
  \node[state,accepting] (C)	[below of=B]         {${q_2}_3$};
   
  \node[state,accepting] (D)	[below of=A]         {${q_3}_3$};

  \path (A)			edge              node{$\PRECEIVE{AC}{b}$} (B)
  					edge				  node{$\PRECEIVE{BC}{c}$} (D)
        (B)			edge				  node{$\PRECEIVE{BC}{c}$}(C)
        	(D)			edge				  node{$\PRECEIVE{AC}{b}$}(C);				
					
\end{tikzpicture}
\]
\end{example}
We introduce the \emph{1-buffer} semantics of communicating machines, recall
that $\vec{w}$ is the vector of buffers.
Intuitively, this semantics forbids a machine to send a message to one
of its partners if there is a non empty channel in the system.

\begin{definition}[1-buffer, deadlock,convergent]\label{def:1b}
  A configuration $(\vec{q};\vec{w})$ of a CS $S =
  (M_\p)_{\p\in\PSet}$ is \emph{stable} if and only if
  $\vec{w} = \vec \emptyword$, while is \emph{final} if it
  is stable and $\vec q \in (F_\p)_{\p \in \PSet}$.
  The \emph{1-buffer semantics of $S$} is given by the relation
  \[
  \twoheadrightarrow \ \mmdef \ \rightarrow \cap (\RS_{\leq 1}(S) \times
  \Act \times \RS_{\leq 1}(S))
  \]
  where  $\rightarrow$ is the relation
  introduced in Definition~\ref{def:rs} and  
  \[ \RS_{\leq 1}(S)
  \mmdef \{(\vec{q};\vec{w}) \in \RS(S) \st (\vec{q};\vec{w})
  \text{ is stable or }
  \exists \p \q \in \chanset \qst \exists a \in \Rset \qst
  \ithel{\vec{w}}{\p\q}=a 
  \land
  \forall \ptp{p'}\ptp{q'} \neq \p\q. \ithel{\vec{w}}{\ptp{p'}\ptp{q'}} = \epsilon\}
  \]
  We say that the system $S$ is \emph{convergent} if and only if for every \emph{reachable} configuration
   $(\vec q;\vec w)$ it is possible to reach a final configuration in the 1-buffer semantics.
  
  Moreover a configuration $(\vec{q};\vec{w})$ is a \emph{deadlock} if
  and only if is not final and
  $(\vec{q};\vec{w}) \not \twoheadrightarrow$

\end{definition}

Note that if a system is \emph{convergent} than it is \emph{deadlock-free}.
The 1-buffer semantics above is instrumental to the relation we
establish between strong agreement of contract automata and
convergence of CFMSs.

\begin{remark}
Note that by considering only finite traces, we rule out all the unfair traces.
For example, consider the following \emph{strongly safe} CA:
\[
\begin{tikzpicture}[->,>=stealth',shorten >=1pt,auto,node distance=3cm,
                    semithick, every node/.style={scale=0.8}]
  \tikzstyle{every state}=[fill=white,draw=black,text=black]

  \node[initial,state] (A)                   	 {$\vec{q_0}$};
  
  \node[state,accepting] (B)  [right of=A]              {$\vec{q_1}$};

  \path (A)			edge[loop above]            node{$(\overline{a},a,\blk)$} (A)
					edge             node{$(\overline b,\blk,b)$} (B);
\end{tikzpicture}
\]

If the first and second participants could execute the transition
$(\vec q_0, (\overline a,a,\blk),\vec q_0)$ infinitely often 
then the third participant would be prevented from reading the message $b$.
This behaviour is ruled out by considering only finite traces.
Indeed all the possible traces generated by the automaton above are described by the regular expression  
$(\overline a,a,\blk)^*(\overline b,\blk,b)$, where the third participant
will eventually reach its goal.
\end{remark}
We define $snd(\vec a) \mmdef \ptp i$ when $\vec a$ is a match action
or an offer action such that $\ithel{\vec{a}}{ i} \in \Oset$ and,
similarly, $rcv(\vec a) \mmdef \ptp j$ when $\vec a$ is a request or a match and
$j$ is such that $\ithel{\vec{a}}{j} \in \Rset$.

\begin{property}\label{pro:tr}
Let $S(KS_{{\aca A}})$ be a CS obtained by
Definition \ref{def:tr}, and $s_0$ be its initial configuration.
Then for all $f$ such that 
$s_0 \stackrel{f}{\twoheadrightarrow} $ there is a strong agreement $\varphi$
such that $f=\llbracket \varphi \rrbracket$ or 
$f=\llbracket \varphi \rrbracket \PSEND{ij}{a} \ $ for some $a,\ptp i,\ptp j$.
\end{property}
\begin{proof}
The proof follows trivially by observing that $KS_{{\aca A}}$ contains
only match transitions and that the 1-buffer semantics does not 
allow other behaviours for the CS.
\qed
\end{proof}

Before providing the main results, we introduce a notion of well-formedness of contract automata. 
We require that an output action of a participant in a particular state is independent from
the states of the other participants in the system.

\begin{definition}[Branching Condition]
A contract automaton $\aca A$ has the \emph{branching condition} iff for each
$\vec q_1,\vec q_2$ reachable in $\aca A$
the following holds

\[
\forall \text{$\vec{a}$ match actions } . (\vec{q}_1 \TRANSS{\vec{a}} \wedge snd(\vec{a})=\ptp i \wedge \ithel{\vec{q_1}}{i}=\ithel{\vec{q_2}}{i}) \text{ implies }  \vec{q}_2 \TRANSS{\vec{a}}
\]

\end{definition}
\begin{example}
Consider the CAs of Example \ref{ex:translation}. The product automaton has the branching
condition while this is not true for the MPC. Indeed, we have
$\ithel{\vec{q_0}}{1}=\ithel{\vec{q_3}}{1}={q_0}_1$ and 
$\vec{q_3} \TRANSS{(\overline a,a,\blk)}$ while there is no $\vec q_i$ such that $(\vec{q_1},  (\overline a, a,\blk), \vec q_i) \in T_{KS}$. 
\end{example}

The next theorem characterizes the relations between a CA 
and the corresponding CS. It states that the CS is capable 
of performing all the moves of the controller of the  CA,
while the CA is capable of performing those traces 
of the CS that are in \emph{strong agreement}. 
Moreover, if a CA has the branching condition, the runs of the 
CS leading
to a deadlock configuration correspond in CA to runs traversing
liable transitions, 
and likewise when the CS reaches
a non-deadlock configuration which does not
reach a final configuration.

\begin{theorem}\label{lem:move}
  Let $S(KS_{{\aca A}})$ be the communicating system obtained by the
  MPC $KS_{{\aca A}}$ with $s_0$, $s_1$, and $s_2$ be the initial
  configurations of $\aca A$, $KS_{{\aca A}}$, and $S(KS_{{\aca A}})$,
  respectively.
  The following hold:
  \begin{enumerate} 
  \item \label{move1}if $s_1 \TRANSS{\varphi} $ then $s_2 \stackrel{\llbracket
      \varphi \rrbracket}{\twoheadrightarrow}$
  \item \label{move2} \label{it:complete} if $s_2 \stackrel{\llbracket \varphi
      \rrbracket}{\twoheadrightarrow}$ and $\varphi$ is a strong agreement then
    $s_0 \TRANSS{\varphi}$
  \item \label{move3} if $s_2 \stackrel{f}{\twoheadrightarrow}$ reaches a deadlock configuration where $f=\llbracket \varphi
      \rrbracket$  or $f=\llbracket \varphi
      \rrbracket\PSEND{ij}{a}$ and the branching condition holds in $\aca A$
    then  $s_0 \TRANSS{\hat{\varphi}}$ has traversed a transition in $\tliable{KS_{{\aca A}}/{\aca A}}$ where $\hat{\varphi}$ can be respectively $\hat{\varphi}=\varphi$ or $\hat{\varphi} = \varphi\vec{a}$ where $\vec{a}$ is a match on $a$ with $snd(\vec{a})=\ptp i, rcv(\vec{a})= \ptp j$.
    \item \label{move4} if $s_2 \stackrel{\llbracket \varphi \rrbracket}{\twoheadrightarrow} s_2'$, $s_2'$ is not a deadlock configuration and no final configurations are reachable from $s_2'$     then  $s_0 \TRANSS{\varphi}$ has traversed a transition in $\tliable{KS_{{\aca A}}/{\aca A}}$.
 %
 %
  \end{enumerate}
\end{theorem}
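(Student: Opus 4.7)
The plan is to prove the four items sequentially, exploiting the tight correspondence between match transitions of $KS_{\aca A}$ and consecutive send-receive pairs in the $1$-buffer semantics of $S(KS_{\aca A})$, together with the characterisation of $KS_{\aca A}$ given in Lemma~\ref{lem:controller}. Items~\ref{move1} and~\ref{move2} are a bisimulation-style result; they are also the main ingredients for the liability arguments of items~\ref{move3} and~\ref{move4}.

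For item~\ref{move1} I proceed by induction on the length of $\varphi$. By Lemma~\ref{lem:controller} every transition of $KS_{\aca A}$ is a match $\vec{a}$; with $snd(\vec{a})=\ptp i$ and $rcv(\vec{a})=\ptp j$, Definition~\ref{def:tr} places in $\sem{KS_{\aca A}}{\ptp i}$ and $\sem{KS_{\aca A}}{\ptp j}$ the edges $\PSEND{ij}{a}$ and $\PRECEIVE{ij}{a}$. The inductive hypothesis supplies a stable configuration before the step, so the send is admissible under $\twoheadrightarrow$ (only channel $\ptp{ij}$ becomes non-empty, with a single message), and the subsequent receive restores stability; hence the match is simulated by a send-receive pair. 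Item~\ref{move2} is dual: a strong agreement $\varphi$ is made of matches only, so $\llbracket\varphi\rrbracket$ decomposes into adjacent send-receive pairs and, since each pair comes from a match in $KS_{\aca A}\subseteq T$, the whole $\varphi$ can be replayed in $\aca A$.

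For item~\ref{move3}, Property~\ref{pro:tr} restricts $f$ to the two prescribed shapes. When $f=\llbracket\varphi\rrbracket$, item~\ref{move2} directly gives $s_0\TRANSS{\varphi}$ and $\hat\varphi=\varphi$; moreover, if $\varphi$ did not exit $KS_{\aca A}$, the reached state $\vec{q}$ would be non-redundant and Lemma~\ref{lem:controller} would yield an outgoing match in $KS_{\aca A}$, i.e.\ an enabled send in $S(KS_{\aca A})$, contradicting the deadlock. When $f=\llbracket\varphi\rrbracket\PSEND{ij}{a}$, the enabled send witnesses a match $\vec{a}$ of $KS_{\aca A}$ fired from some reachable $\vec{q_1}$ with $\ithel{\vec{q_1}}{i}=\ithel{\vec{q}}{i}$; the branching condition then lifts this to $\vec{q}\TRANSS{\vec{a}}$ in $\aca A$, so $\hat\varphi=\varphi\vec{a}$. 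Because the dangling send cannot be consumed in the CS, there is no match on $a$ in $KS_{\aca A}$ whose source aligns with $\ithel{\vec{q}}{j}$ on the receiver's component, so the transition $(\vec{q},\vec{a},\vec{q}')$ itself lies in $T\setminus T'$ and is, by Definitions~\ref{def:controlled} and~\ref{def:culpability}, the liable transition of $\hat\varphi$.

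Item~\ref{move4} uses the same liability argument: because $\varphi$ is a strong agreement, the post-state $s_2'$ is stable and corresponds to a unique state $\vec{q}$ of $\aca A$ reached by $\varphi$. If $\vec{q}$ were reachable in $KS_{\aca A}$, Lemma~\ref{lem:controller} would give a match path from $\vec{q}$ to an accepting state, which translates via item~\ref{move1} into a $\twoheadrightarrow$-path from $s_2'$ to a final configuration, contradicting the hypothesis; hence $\varphi$ has left $KS_{\aca A}$ and traverses a transition in $\tliable{KS_{\aca A}/\aca A}$. The main obstacle I foresee is making the branching-condition step in item~\ref{move3} fully rigorous: one must lift a purely local observation (an enabled send in $\sem{KS_{\aca A}}{\ptp i}$) into a global match of $\aca A$ firable from $\vec{q}$, and it is precisely this lifting that the branching condition is designed to support.
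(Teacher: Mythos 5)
Your proposal is correct and follows essentially the same route as the paper's proof: induction on $\varphi$ via the translation and product definitions for items~\ref{move1} and~\ref{move2}, a case split driven by Property~\ref{pro:tr} (equivalently, empty vs.\ non-empty buffers) with the branching condition lifting the enabled send to a match of $\aca A$ for item~\ref{move3}, and the redundancy/Lemma~\ref{lem:controller} argument combined with item~\ref{move1} for item~\ref{move4}. The only cosmetic difference is that in item~\ref{move3} you identify the liable transition directly (which tacitly needs the dichotomy ``either an earlier liable transition was already traversed, or $\vec q$ is reachable in $KS_{\aca A}$ so the last transition is the one sent to $\bot$''), whereas the paper reaches the same conclusion by contradiction.
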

\begin{proof}
  Through the proof assume that $s_0'$, $s_1'$ and $s_2'$ are such
  that $s_0 \TRANSS{\varphi} s_0'$, $s_1 \TRANSS{\varphi} s_1'$ and
  $s_2 \TRANSOB{\sem \varphi{}} s_2'$.
  \begin{enumerate}
  \item By induction on $\varphi$.
    Assume that $s_1 \TRANSS{\vec{a}} s'_1$ with $\vec{a}$ match
    action on $a$ where principal  $i$ makes the offer and
    principal $j$ makes the request on $a$.
    Let $\ithel{\vec{q_0}}{i}$ and $\ithel{\vec{q_0}}{j}$ be the initial
    states of participants $\ptp i$ and $\ptp j$ in $S(KS_{{\aca A}})$.
    By Definition \ref{def:tr}, we have  for some $\vec{q_1}$ and $\vec{q_2}$
    that
    \[ (\ithel{\vec{q_0}}{i},\PSEND{ij}{a},\ithel{\vec{q_1}}{i})
    \qquad \text{and} \qquad
    (\ithel{\vec{q_0}}{j},\PRECEIVE{ij}{a},\ithel{\vec{q_2}}{i})
    \]
    are transitions of participants $\ptp i$ and $\ptp j$, respectively.
    We have $s_2 \TRANSOB{\PSEND{ij}{a}} \ \TRANSOB{\PRECEIVE{ij}{a}} s_2'$
    since after the first transition participant $\ptp j$ remains
    in its initial state.

    When $|\varphi| > 1$, we have for a configuration $s''_1$ that $s_1
    \TRANSS{\varphi} s''_1 \TRANSS{\vec{a_1}} s'_1$.
    Hence $s_2 \stackrel{\sem
      \varphi {}}{\twoheadrightarrow} s''_2$ (by the induction hypothesis)
    and,
    since $\vec{a_1}$ is a match, with the same reasoning we can
    conclude $s''_2 \TRANSOB{\llbracket \vec{a_1} \rrbracket} s'_2$.

  \item The proof is again by induction.
    Assume $s_2  \TRANSOB{\sem{\vec a} {}} s_2'$, where $\vec{a}$ is a
    match on $a$ which involves (the principals $i$ and $j$
    corresponding to) participants $\ptp i$ and $\ptp j$.
    As before let $i$ perform the offer and $j$ the request.
    By Definition~\ref{def:prod} (of product), we have that there is a
    transition $(\vec{q_0},\vec{a},\vec{q})$ in $\aca A$ from its
    initial state.
    
    When $|\varphi| > 1$, we have $s_2 \TRANSOB{\sem \varphi {}} s''_2
    \TRANSOB{\sem {\vec a} {}} s_2'$ and there is $w' \in (\Lset^n)^*$
    such that (by the induction hypothesis) $s_0 \TRANSS{\varphi} s_0' = (\vec{q'},w')$ is a run in
    $\aca A$.
    Reasoning as in the base case, we conclude that $\aca A$ has a
    transition of the form $(\vec{q'},\vec{a},\vec{q''})$.

  \item Let $s_2'= \csconf q w$ be the deadlock configuration reached from $s_2$
    with $f$.
   \newpage
    We distinguish two cases:
    \begin{itemize}
    \item if $\vec{w} \neq \vec \emptyword$ (namely some buffer in
      $\vec{w}$ is not empty) then $\hat{\varphi}$ is not a strong
      agreement (in fact, if it were a strong agreement then the 1-buffer
      semantics of $S(KS_{\aca A})$ would yield $\vec w = \vec
      \emptyword$).
      Then by Property \ref{pro:tr} we have $f= \sem{\varphi}{}\PSEND
        {\ptp{ij}} a$ and $\hat{\varphi}=\varphi\vec{a}$.
        Moreover, Theorem~\ref{lem:move}.\ref{move1} guarantees that  a run $s_0
        \TRANSS{\varphi} s_0' = (\vec q',w')$ exists in $\aca A$
        and, by Definition~\ref{def:tr} and the branching condition, there is a transition $(\vec q',
        \vec a, \vec q'')$ in $\aca A$ where $\vec a$ is a match action on $a$. 
        By contradiction, assume that there is no liable transition in the run
        $s_0 \TRANSS{\varphi\vec a} (\vec q'',w'')$ in $\aca A$.
        Then, by construction (cf. Definition~\ref{def:controller}), the
        MPC of $\aca A$ has the same run, namely $s_1 \TRANSS{\varphi\vec
          a} (\vec q'',w'')$.
        Finally, by Theorem~\ref{lem:move}.\ref{move1}, $s_2'
        \TRANSOB{\PRECEIVE{\ptp{ij}} a}$, contradicting the hypothesis that $s_2'$ is a
        deadlock configuration. 
      
    \item if $\vec{w} = \vec \emptyword$ (namely, all buffers are
      empty) then, by definition of deadlock configuration of CFSMs
      (cf. Definition~\ref{def:1b}), the state $\vec{q}$ of
      configuration $s_2' = \csconf q w$ is not final, there is no
      participant ready to fire an output, and there is a participant
      waiting for an input on one of its buffers.
      The latter condition is guaranteed by the construction of
      CFSMs from controllers.
      Since $\vec w = \vec \emptyword$, we have $f= \sem{\varphi}{}$ and $\hat{\varphi}=\varphi$ where $\varphi \in \mathfrak Z$ and
      there is a run $\s_0 \TRANSS{\varphi} s_0' = (\vec{q_1},w')$ in
      $\aca A$ (by Theorem~\ref{lem:move}.\ref{move2}). Note that $\vec q_1$ is redundant in $KS_{\aca A}$ (otherwise by Theorem~\ref{lem:move}.\ref{move1},  $s_2'$ would not be a deadlock
      configuration)  and, by construction (Lemma~\ref{lem:controller}), $\vec
      q_1$ is removed from $KS_{\aca A}$.
      This implies that a liable transition has been traversed in
      $s_0 \TRANSS{\varphi} s_0'$. 
    \end{itemize}
 \item Wlog we can assume that $\varphi$ is a strong agreement ( otherwise we  
       have
        $s_2 \stackrel{\llbracket \varphi \rrbracket \PSEND{ij}{a}}{\twoheadrightarrow}s_2'$
 	    for some $\ptp i,\ptp j,\overline a$
 	   and since $s_2'$ is not a deadlock it is possible to perform 
 	   the step $s_2'\stackrel{\PRECEIVE{ij}{a}}{\twoheadrightarrow}s_2''$ and we have
 	   that $ \varphi  \vec{a}$ is a strong agreement where $\vec{a}$ is a match action with
 	   $\ithel{\vec a}{i}=\overline a, \ithel{\vec a}{j}=a$).   
 	   Moreover, from $s_2''$
 	   is not possible to reach a final state, and we apply the following
 	   reasoning to $s_2'', \varphi \vec a$ instead of $s_2', \varphi$.
 	   
 	   Assume by contradiction that $s_0 \TRANSS{\varphi} s_0'$ has traversed no 
 	   liable transitions. Then by Definition \ref{def:culpability} there exists 
 	   $\varphi'$ such that $s_0' \TRANSS{\varphi'} s_0''$ and $s_0''$ is a 
 	   final configuration. By Lemma $\ref{lem:controller}$ we must have
 	   $s_1 \TRANSS{\varphi\varphi'} s_1'$ where $s_1'$ is a final configuration. Hence
 	   by applying Theorem \ref{lem:move}.\ref{move1} we have 
 	   $s_2' \stackrel{\llbracket \varphi' \rrbracket }{\twoheadrightarrow}s_2'''$ where $s_2'''$ is a final configuration, obtaining a 
 	   contradiction.
    \qed
  \end{enumerate}
\end{proof}

%
%
%
%
%
%
%
%
%
%
%
%
%
%
%

Note that the converse of Theorem \ref{lem:move}.\ref{move3} does not hold. Indeed if a CA $\mathcal A$
passes through a liable transition, it can be that $ S (KS_{\mathcal A})$ never reaches a deadlock configuration.
\begin{example}
Consider the CAs and CMs of Example \ref{ex:translation}.  A possible trace belonging to
the system $S(KS_{\ptp A \otimes \ptp B \otimes \ptp C})$ is generated by the transitions:
$({q_0}_1,\PSEND{AB}{a},{q_1}_1),({q_0}_1,\PRECEIVE{AB}{a},{q_2}_1)$. By using Theorem \ref{lem:move}.\ref{move2} this trace corresponds to the liable transition $(\vec{q_0},(\overline a,a,\blk),\vec q_7)$ of the product automaton.

However after this two steps the system $S(KS_{\ptp A \otimes \ptp B \otimes \ptp C})$ will never reach a deadlock configuration. Indeed, it is always possible to perform the transitions:$({q_1}_1,\PSEND{AB}{a},{q_1}_1),({q_2}_1,\PRECEIVE{AB}{a},{q_2}_1)$.
Note that $S(KS_{\ptp A \otimes \ptp B \otimes \ptp C})$ is deadlock-free but not convergent.
\end{example}

%
%
%
%
%
%

We are now ready to state our main result: the controller of a CA has the branching condition if and only if the corresponding CS is convergent.
\begin{theorem}\label{the:deadlockfree}
  Let $\aca A$  be a contract automaton, $KS_{{\aca A}}$ be its MPC, and $S(KS_{\aca A})$ be the communicating system obtained by $KS_{\aca A}$.
 The following statements are equivalent :
  \begin{itemize}
  \item[1]  $S(KS_\mathcal{A})$ is convergent
  \item[2]  $KS_{\aca A}$ has the branching condition
 \end{itemize} 
\end{theorem}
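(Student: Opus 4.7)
The theorem is an equivalence between convergence of $S(KS_{\aca A})$ and the branching condition (BC) on $KS_{\aca A}$, so my plan is to prove the two implications separately, using Theorem~\ref{lem:move} as the bridge between traces of $KS_{\aca A}$ and traces of the CS.

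For the direction BC $\Rightarrow$ convergent, I would take an arbitrary reachable configuration $(\vec q;\vec w)$ of $S(KS_{\aca A})$ and build a 1-buffer trace from it to a final configuration by cases on $\vec w$. If $\vec w = \vec \emptyword$, then Theorem~\ref{lem:move}.\ref{move2} together with determinism ensures that $\vec q$ is reachable in $KS_{\aca A}$; since $KS_{\aca A}$ is the MPC, $\vec q$ is non-redundant, so there is a match-action path $\vec q \to^{*} \vec q_f$ to some $\vec q_f \in F$ in $KS_{\aca A}$, and Theorem~\ref{lem:move}.\ref{move1} translates it back to a 1-buffer trace ending in the final configuration $(\vec q_f;\vec\emptyword)$. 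If instead some channel $\ptp{ij}$ holds a single pending $a$, then $(\vec q;\vec w)$ was reached from a stable predecessor $(\vec q_{\text{prev}};\vec\emptyword)$ by $\PSEND{ij}{a}$; the existence of this CFSM transition for $\ptp i$ means that $KS_{\aca A}$ contains some match $\vec q_3 \TRANSS{\vec a} \vec q_3^{*}$ with $\ithel{\vec q_3}{i} = \ithel{\vec q_{\text{prev}}}{i}$, and BC then forces $\vec q_{\text{prev}} \TRANSS{\vec a}$ in $KS_{\aca A}$. Consequently $\ptp j$'s CFSM admits $\PRECEIVE{ij}{a}$, the receive moves the CS to a stable $(\vec q'';\vec\emptyword)$ with $\vec q''$ reachable in $KS_{\aca A}$, and the previous case finishes the argument.

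For the direction convergent $\Rightarrow$ BC, I would argue by contrapositive. Suppose BC fails at reachable states $\vec q_1, \vec q_2$ of $KS_{\aca A}$ with $\vec q_1 \TRANSS{\vec a}$ a match on $a$ with $snd(\vec a) = \ptp i$ and $rcv(\vec a) = \ptp j$, $\ithel{\vec q_1}{i} = \ithel{\vec q_2}{i}$, yet $\vec q_2 \NTRANSS{\vec a}$ in $KS_{\aca A}$. By Theorem~\ref{lem:move}.\ref{move1}, $(\vec q_2;\vec\emptyword)$ is reachable in the CS, and $\ptp i$'s CFSM still carries $\PSEND{ij}{a}$ from $\ithel{\vec q_2}{i}$, so the system transits to $(\vec q_2';\vec w)$ with $a$ pending on $\ptp{ij}$. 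Convergence forces a final configuration to be reachable from here; in 1-buffer semantics the only possible next step is $\ptp j$'s $\PRECEIVE{ij}{a}$, so $\ptp j$ can indeed receive and the CS reaches some stable $(\vec q_2'';\vec\emptyword)$ from which, again by convergence, a final configuration is reachable through a strong agreement. Applying Theorem~\ref{lem:move}.\ref{move2} to the whole CS run from $s_2$ through $(\vec q_2;\vec\emptyword), (\vec q_2';\vec w), (\vec q_2'';\vec\emptyword)$ down to $F$, we obtain a corresponding run in $\aca A$ of the form $s_0 \TRANSS{\varphi_2 \vec a \varphi_3} \vec q_f$ with $\vec q_f \in F$. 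Hence $\aca A$ contains the transition $\vec q_2 \TRANSS{\vec a} \vec q_2''$ and $\vec q_2''$ is non-redundant; combined with the non-redundancy of $\vec q_2$, this transition lies in $KS_{\aca A}$, contradicting the hypothesis.

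The main subtlety sits in the second direction: it is not enough to observe that the CS can deadlock immediately after $\ptp i$'s send, because $\ptp j$ may well accept the message through a receive transition introduced by a different match in $KS_{\aca A}$. The key step is to combine this send/receive with the global convergence hypothesis and lift the resulting CS trace back to a match transition of $\aca A$ whose target is guaranteed non-redundant, so that the transition must already belong to $KS_{\aca A}$.
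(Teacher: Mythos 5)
Your second direction (convergent $\Rightarrow$ BC, by contrapositive) is essentially the paper's own argument: reach $(\vec q_2;\vec\emptyword)$ via Theorem~\ref{lem:move}.\ref{move1}, fire the send enabled by the shared local state $\ithel{\vec q_1}{i}=\ithel{\vec q_2}{i}$, and use convergence plus Theorem~\ref{lem:move}.\ref{move2} to lift the continuation to a run $\vec q_2\TRANSS{\vec a\varphi_3}\vec q_f$ with $\vec q_f$ accepting, so that $(\vec q_2,\vec a,\cdot)$ is a match between non-redundant states and must already be in $KS_{\aca A}$ by Lemma~\ref{lem:controller} --- contradiction. Your closing remark about the receive possibly succeeding through a different match is exactly the right subtlety, and the paper handles it the same way. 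For the first direction you depart from the paper: the paper argues by contradiction, splitting on whether the non-convergent witness is a deadlock or not and invoking parts \ref{move3} and \ref{move4} of Theorem~\ref{lem:move} to extract a liable transition, whereas you give a direct inductive construction of a run to a final configuration from any reachable configuration. Your route is arguably cleaner in that it bypasses the liability machinery entirely and makes the role of BC (every locally enabled send is globally matched) explicit.

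However, as written your first direction has a genuine flaw in the stable case: you claim that Theorem~\ref{lem:move}.\ref{move2} ``together with determinism'' ensures that the control state $\vec q$ of a stable reachable configuration is reachable in $KS_{\aca A}$. Part~\ref{move2} only yields a run of $\aca A$, not of $KS_{\aca A}$, and in general these differ: the paper's example following Theorem~\ref{lem:move} exhibits a stable configuration of $S(KS_{\ptp A\otimes\ptp B\otimes\ptp C})$ reached by $\PSEND{AB}{a}\,\PRECEIVE{AB}{a}$ whose lifted run in $\aca A$ is the liable transition to $\vec q_7$, a state that is \emph{not} in $KS_{\aca A}$. What actually guarantees reachability in $KS_{\aca A}$ is the branching condition itself, applied inductively along the 1-buffer run: every send enabled at a stable $KS$-reachable configuration comes from some $KS$-transition whose source agrees on the sender's component, BC transfers that transition to the current global state, and the subsequent receive lands in the next $KS$-reachable state. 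That is precisely the content of your second case, so your two cases are really the base and step of one induction on the length of the 1-buffer trace; presented as an unordered case split with the stable case justified by part~\ref{move2}, the argument is circular. Restructure the first direction as that explicit induction (maintaining the invariant ``every stable reachable configuration projects from a state reachable in $KS_{\aca A}$, hence non-redundant'') and the proof goes through.
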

\begin{proof}
 Let $s_0$, $s_1$, and $s_2$ be the initial
  configurations of $\aca A$, $KS_{{\aca A}}$, and $S(KS_{{\aca A}})$,
  respectively.
  
($\leftarrow$) Assume by contradiction that $S( KS_\mathcal{A})$ is not
  convergent and $KS_\mathcal{A}$  has the branching condition holds, namely there exists $s_2'=(\vec{q},\vec{w})$ such that $s_2 \stackrel{\varphi}{\twoheadrightarrow} s_2'$ and    
 no final configurations are reachable from  
  $(\vec{q},\vec{w})$. 
 We distinguish two cases:
 \begin{itemize}
 	\item  $s_2'$ is not a deadlock configuration. Then by applying 
 	Theorem~\ref{lem:move}.\ref{move4} we have $\varphi=\varphi_1
  \vec{a}\varphi''$ for some $\varphi_1,
  \vec{a},\varphi''$ such that 
 $\vec{q_0} \TRANSS{\varphi_1}
  \vec{q_1}$ is a run of ${\aca A}$ and
  $(\vec{q_1},\vec{a},\vec{q_1'}) \in
  \tliable{KS_\mathcal{A}/\mathcal{A}}$. Note that $\vec{a}$ is a
  match action,  otherwise a participant in $S( KS_\mathcal{A})$ has fired 
    an action to the environment in $\varphi$. Since $S( KS_\mathcal{A})$ is derived from $ KS_\mathcal{A}$
 and all the transitions of  $ KS_\mathcal{A}$ are match, this is not possible.
    Assume $\vec{a}$ is an action on $a$ with $snd(\vec a)=\ptp i$, $rcv(\vec{a})=\ptp j$ for some $\ptp i,\ptp j  \in \PSet$.

 By hypothesis we know that $s_2 \stackrel{\llbracket \varphi_1
    \rrbracket}{\twoheadrightarrow} s_2'   \stackrel{ \PSEND{ij}{a} }{ \twoheadrightarrow}$, hence in the configuration $s_2'$ the participant 
$\ptp i$ is able to fire the action $\PSEND{ij}{a}$.  By Definition \ref{def:tr}, \ref{def:prod}  there must be a state $\vec{q_2}$ in $KS_{\aca A}$ such that $(\vec{q_2},\vec{a},\vec{q_3})$ is a transition in $KS_{\aca A}$  (recall that  $S( KS_\mathcal{A})$ is derived from $ KS_\mathcal{A}$) and $\ithel{\vec{q_2}}{i}=\ithel{\vec{q_1}}{i}$ (otherwise we would not have $ s_2'   \stackrel{ \PSEND{ij}{a} }{ \twoheadrightarrow}$), and since $(\vec{q_1},\vec{a},\vec{q_1'}) \in
  \tliable{KS_\mathcal{A}/\mathcal{A}}$ we conclude that the branching condition does not hold in $KS_\mathcal{A}$, obtaining a contradiction.
  \item all the possible configurations $s_2'$ are deadlock. Then it must be that 
        $s_2  \stackrel{\llbracket \varphi
    \rrbracket}{\twoheadrightarrow} s_2''   \stackrel{ \PSEND{ij}{a} }{ \twoheadrightarrow} s_2'$ for some $\ptp i,\ptp j, \overline a$,
    and from $s_2''$ it is possible to reach a final
    configuration, that is $s_2''\stackrel{\llbracket \varphi'
    \rrbracket}{\twoheadrightarrow} s_2''' $ where $s_2'''$ is final.  
 Note that it is not possible to have
    $s_2''   \stackrel{ \PRECEIVE{ij}{a} }{ \twoheadrightarrow} s_2'$ otherwise
    we would have that from $s_2''$ , which is not a deadlock,     
    is not possible to reach a final configuration, or that $s_2'$ is not a deadlock.
    
    By
    Theorem \ref{lem:move}. \ref{move2} we have $s_0 \TRANSS{\varphi \varphi'} s_0'$
    where $s_0'$ is final, hence by Lemma $\ref{lem:controller}$ it must be
    $s_1 \TRANSS{\varphi} s_1'=(\vec q_1,w)$. 
    As the previous case,  
    by Definition \ref{def:tr}, \ref{def:prod}  there must be a state $\vec{q_2}$ in $KS_{\aca A}$ such that $(\vec{q_2},\vec{a},\vec{q_3})$ is a transition in $KS_{\aca A}$ where $\vec a$ is a match  an action on $a$ with $snd(\vec a)=\ptp i$, $rcv(\vec{a})=\ptp j$ and $\ithel{\vec{q_2}}{i}=\ithel{\vec{q_1}}{i}$.
    Moreover since $s_2'$ is a deadlock, it must be that
    there is no transition $(\vec q_1, \vec a, \vec q_4)$ in $KS_\aca A$, 
    otherwise by Theorem \ref{lem:move} . \ref{move1} we have 
     $s_2'   \stackrel{ \PRECEIVE{ij}{a} }{ \twoheadrightarrow}$,
     obtaining a contradiction.
    Hence we have that the branching condition does not hold in $KS_\aca A$,
    since there is no transition $(\vec q_1, \vec a, \vec q_4)$ in $KS_\aca A$.
   
 \end{itemize}


($\rightarrow$) By contradiction assume that the branching condition 
does not hold in $KS_{\aca A}$. Hence we have two states $\vec{q_1},\vec{q_2}$
 in $KS_{\aca A}$ such that $\vec{q_0} \TRANSS{\varphi} \vec{q}_1 \TRANSS{\vec{a}},
 \vec{q_0} \TRANSS{\varphi'} \vec{q}_2 \not \hspace{-5pt} \TRANSS{\vec{a}}$ where 
 $\vec{a}$ is a match on $a$ with $snd(\vec a)=\ptp i$, $rcv(\vec{a})=\ptp j$ for some
 $\ptp i,\ptp j  \in \PSet$  and $\ithel{\vec{q_1}}{i}=\ithel{\vec{q_2}}{i}$.

By Theorem~\ref{lem:move}.\ref{move1} we have $s_2 \TRANSS{\sem{\varphi}{}} s_2' 
\TRANSS{\PSEND{ij}{a}}$ and $s_2 \TRANSS{\sem{\varphi'}{}} s_2''$.  By Definition \ref{def:tr} and  
\ref{def:prod} we know that the participant $\ptp i$ is in the same state in the configuration $s_2',s_2''$,
hence we have $ s_2'' \TRANSS{\PSEND{ij}{a}} s_2'''$ and from $s_2'''$ is is not possible to reach
a final configuration. Otherwise 
if $ s_2''' \TRANSS{\PRECEIVE{ij}{a}\sem{\varphi_2}{}} s_f$ where $s_f$ is final, then by Theorem~\ref{lem:move}.\ref{move2} we would have
 $\vec{q}_2 \TRANSS{\vec{a}\varphi_2} \vec q_f$ where $\vec q_f$ is a final state of the CA,  hence $\vec{q}_2 \TRANSS{\vec{a}}$ is not
liable and belongs to  $KS_{\aca A}$, obtaining a contradiction.

\qed
\end{proof}

A consequence of Theorem \ref{the:deadlockfree} is that a \emph{strongly safe} CA has the branching condition if and only if the corresponding CS is convergent.
\begin{corollary}
Let $\aca A$ be a contract automaton, then $\aca A$ is strongly safe and has branching condition if and only if $S(\aca A)$ is convergent.
\end{corollary}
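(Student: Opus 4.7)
The plan is to reduce the corollary to Theorem~\ref{the:deadlockfree} by showing that, under strong safety, $S(\aca A)$ coincides, in its convergence-relevant reachable behaviour, with $S(KS_{\aca A})$; once this identification is in place, the corollary follows by applying the theorem to $KS_{\aca A}$.

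For the forward direction, I would assume $\aca A$ is strongly safe and has the branching condition. Strong safety together with Lemma~\ref{lem:controller} gives $\mathscr{L}(KS_{\aca A}) = \mathfrak{Z}\cap\mathscr{L}(\aca A) = \mathscr{L}(\aca A)$, and the branching condition, being a statement about match transitions between reachable states, restricts from $\aca A$ to $KS_{\aca A}$. Theorem~\ref{the:deadlockfree} then yields convergence of $S(KS_{\aca A})$. The remaining step is to transfer convergence to $S(\aca A)$: by Definition~\ref{def:tr}, any transition in $S(\aca A)$ that is not already in $S(KS_{\aca A})$ comes from a non-match transition of $\aca A$ and is translated as an interaction with the pseudo-participant~$-$; strong safety ensures that such transitions leave the CS in a configuration from which no final configuration is reachable only if they originate from states already redundant in $\aca A$, so they cannot be fired in convergence-critical ways from states reachable in $S(KS_{\aca A})$.

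For the converse direction, I would assume $S(\aca A)$ is convergent and first establish strong safety by contradiction. If some $\varphi\in\mathscr{L}(\aca A)\setminus\mathfrak Z$ existed, then $\varphi$ would contain a non-match action and, by Definition~\ref{def:tr}, the corresponding trace of $S(\aca A)$ would fire either $\PSEND{i\text{-}}{a}$ or $\PRECEIVE{\text{-}j}{a}$; firing the send leaves a message in a buffer whose receiving endpoint is the pseudo-participant $-$ which no machine of $S(\aca A)$ can consume, and the receive cannot be enabled in the first place since no machine ever writes to that buffer. In either case, no final configuration would be reachable from the resulting reachable configuration, contradicting convergence. Once strong safety is secured, $S(\aca A)$ and $S(KS_{\aca A})$ have the same reachable behaviour (by the identification above, run in reverse), and Theorem~\ref{the:deadlockfree} delivers the branching condition for $KS_{\aca A}$, which lifts to $\aca A$ since the condition is phrased over match transitions between reachable states and those coincide in the two automata under strong safety.

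The main obstacle I expect is the identification $S(\aca A)\approx S(KS_{\aca A})$: I need to rule out that non-match transitions of $\aca A$, which do survive the translation as interactions with $-$, spoil convergence from configurations that are reachable only in $S(\aca A)$. The right tool here is a careful case analysis in the spirit of Theorem~\ref{lem:move}, distinguishing whether a $-$-labelled action has been fired from the initial configuration and tracking, through the branching condition, that at every reachable control state a match alternative in $KS_{\aca A}$ is available to take the system back to the MPC's behaviour.
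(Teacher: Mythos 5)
Your skeleton is the paper's: under strong safety identify $\aca A$ with its MPC and invoke Theorem~\ref{the:deadlockfree}. The paper does this in one line by asserting $\aca A=KS_{\aca A}$ outright; you correctly observe that strong safety only yields $\mathscr{L}(KS_{\aca A})=\mathscr{L}(\aca A)$, that the translation of Definition~\ref{def:tr} depends on the transition structure and not on the language, and that $S(\aca A)$ may therefore retain offer/request transitions towards $-$ that $S(KS_{\aca A})$ drops. Spotting this is to your credit, but neither of your two bridging arguments holds up. For the forward direction, your claim that a surviving non-match transition can spoil convergence \emph{only if it originates from a state already redundant in $\aca A$} is false: the dangerous transitions are the non-match ones leaving a \emph{non-redundant} state and entering a redundant one. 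Take $\ptp A$ with an offer $\co a$ into an accepting state and an offer $\co b$ into a non-accepting sink, and $\ptp B=a$. The product accepts only the match word $(\co a,a)$, so it is strongly safe, and the branching condition holds; yet in $S(\aca A)$ machine $\ptp A$ can fire $\PSEND{A-}{b}$ from the initial configuration, stranding a message on a channel no machine reads, after which no stable (hence no final) configuration is reachable. So $S(\aca A)$ is not convergent even though $S(KS_{\aca A})$ is, and the identification you need cannot be established without an extra trimness assumption on $\aca A$ (which is what the paper's ``$\aca A=KS_{\aca A}$'' silently presupposes).

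For the converse, the request case of your contradiction argument breaks down exactly where you hedge: $\PRECEIVE{-j}{a}$ is never enabled, so there is no ``resulting reachable configuration'' and no contradiction with convergence is produced. This is not repairable: with $\ptp A=a+\co b$ and $\ptp B=b$, the product accepts the non-match word $(a,\blk)(\blk,b)$, so it is not strongly safe, yet $S(\aca A)$ consists of $\PSEND{AB}{b}$, $\PRECEIVE{AB}{b}$ plus two never-enabled receives from $-$, and is convergent. So ``convergent $\Rightarrow$ strongly safe'' fails for words whose non-match actions are lone requests, and your proof of that implication cannot be completed as stated. (The paper's own proof does not address this direction at all, so here you attempted more than the paper, but the attempt exposes a real defect of the statement rather than closing it.)
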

\begin{proof}
The statement follows trivially by notice that if $\aca A$ is strongly safe then $\aca A= KS_\mathcal{A}$, hence $KS_{\aca A}$ has the branching condition and we can apply Theorem \ref{the:deadlockfree}.

  \qed
\end{proof}

\begin{example}
Consider the following \emph{strongly safe} CA $\mathcal A=\ptp A \otimes \ptp B \otimes \ptp C \otimes \ptp D$:
\[
\begin{tikzpicture}[->,>=stealth',shorten >=1pt,auto,node distance=3.0cm,
                    semithick, every node/.style={scale=0.8}]
  \tikzstyle{every state}=[fill=white,draw=black,text=black]

  \node[initial,state] (A)                   	 {$\vec{q_0}$};
  
  \node[state] (B)  [right of=A]              	{$\vec{q_1}$};
  
  \node[state] (C)	[right of=B]         {$\vec{q_2}$};
  
  \node[state] (D)  [below of=B]              {$\vec{q_3}$};
  
  \node[state]  (E) [below of=A]				{$\vec q_4$}; 
  
  \node[state,accepting]  (F) [below of=C]				{$\vec q_5$};

  \path (A)			edge              node{$(\overline{a},\blk,a,\blk)$} (B)
					edge              node[above]{$(\overline a,\blk,\blk,a)$} (D)
					edge[bend left]				  node{$(\blk,\overline{a},a,\blk)$} (C)
					edge	[bend right]			  node[left]{$(\blk,\overline{a},\blk,a)$}(E)
        (B)			edge				  node[above]{$(\blk,\overline{a},\blk,a)$}(F)
        (C)			edge	[bend left]			  node{$(\overline a,\blk,\blk,a)$}(F)
        	(D)			edge				  node{$(\blk,\overline{a},a,\blk)$}(F)
        	(E)			edge[bend right]	  node[above]{$(\overline{a},\blk,a,\blk)$}(F);				
\end{tikzpicture}
\]
In this example we have four participants: the first two ($\ptp A,\ptp B$) perform the
same offer $\overline a$, while the others ($\ptp C,\ptp D$) perform the request $a$. The CA $\aca A$ has no branching condition: for example the internal state of
the participant $\ptp B$ is the same in both states $\vec q_1,\vec q_3$.
From state $\vec q_1$ we have the match transition $(\blk,\overline a,\blk,a)$
which is not available in state $\vec q_3$, and from state $\vec q_3$ we have 
the match transition  $(\blk,\overline a,a,\blk)$ which is not available from
state $\vec q_1$.

The translation yields the CMs:
\[
\sem{KS_{\aca A}}{\ptp A}=\PSEND{AC}{a}+\PSEND{AD}{a}
\qquad
\sem{KS_{\aca A}}{\ptp B}=\PSEND{BC}{a}+\PSEND{BD}{a}
\]
\[
\sem{KS_{\aca A}}{\ptp C}=\PRECEIVE{AC}{a}+\PRECEIVE{BC}{a}
\qquad
\sem{KS_{\aca A}}{\ptp D}=\PRECEIVE{AD}{a}+\PRECEIVE{BD}{a}
\]
A deadlock configuration is generated
by the trace $\PSEND{AC}{a}.\PRECEIVE{AC}{a}.\PSEND{BC}{a}$.

\end{example}
\begin{figure}[tb]
  \center

  \begin{tikzpicture}[->,>=stealth',shorten >=1pt,auto,node distance=3cm,
    semithick, every node/.style={scale=0.7}]
    \tikzstyle{every state}=[fill=white,draw=black,text=black]

    \node[initial,state] (A)                   	 {$\vec{q_0}$};
    
    \node[state] (B)  [right of=A]                  			 {$\vec{q_1}$};

    \node[state] (C)  [right of=B]                 			 {$\vec{q_2}$};

    \node[state,accepting] (D)  [right of=C]                 	 {$\vec{q_3}$};

    \node[state] (E)  [below of=A]{$\vec{q_4}$};

    \node[state] (F)  [right of=E]{$\vec{q_5}$};

    \node[state] (G)  [right of=F]{$\vec{q_6}$};

    \node[state] (H)  [right of=G]{$\vec{q_7}$};

    \node[state] (I)  [below of=G]{$\vec{q_8}$};

    \node[state] (L)  [below of=H]{$\vec{q_9}$};

    \node[state] (Z)  [right of=L,color=red,text=white]{$\bot$};

    \path (A)			edge            node{$(\aout,\blk,\areq)$} (B)
    edge[left]            node{$(\aout,\areq,\blk)$} (E)
    edge		   node{$(\blk,\creq,\cout)$} (F);

    \path (B)			edge            node{$(\okreq,\blk,\okout)$} (C);

    \path (C)			edge            node{$(\dreq,\blk,\dout)$} (D);

    \path (E)			edge [bend right]           node{$(\blk,\creq,\cout)$} (I);

    \path (F)			edge            node{$\hspace{-25pt}(\aout,\areq,\blk)$} (I);

    \path (I)			edge [right]           node{$(\okreq,\okout,\blk)$} (G)
    edge            node{$(\blk,\okout,\okreq)$} (L);

    \path (G)			edge            node{$(\blk,\okout,\okreq)$} (H);

    \path (H)			edge            node{$(\dreq,\blk,\dout)$} (D);

    \path (L)			edge[right]            node{$(\okreq,\okout,\blk)$} (H)
    edge[color=red]            node{$(\blk,\blk,\dout)$} (Z);

  \end{tikzpicture}
  \caption{$KS_{\aca A}$}
\label{fig:badexample}
\end{figure}
\begin{example}

  Figure~\ref{fig:badexample} depicts the automaton $KS_{\aca A}$ 
  where $\aca A=\ptp A \otimes \ptp B \otimes \ptp C$:
\[
\ptp A=\overline{a}.ok.d \qquad
\ptp B=(a.c + c.a).\overline{ok}.\overline{ok} \qquad
\ptp C=a.\overline{ok}.\overline{d} + \overline{c}.ok.\overline{d}
\]
Participant $\ptp A$ sends an offer $\overline{a}$ and then waits 
on acknowledgement $ok$ and  then a message $d$.
Participant $\ptp B$ acts as an intermediary: it receives the requests $a$
and $c$ and then replies with $\overline{ok}$.
Finally, participant $\ptp C$ can either behave similarly to $\ptp A$
or directly acknowledge the
message received on $a$ (and then send~$d$).
The translation in Definition~\ref{def:tr} yields the following
communicating machines, written as regular expressions:

\[
\sem{KS_{\aca A}}{\ptp A}=
\PSEND{AC}{a}.\PRECEIVE{CA}{ok}.\PRECEIVE{CA}{d}+ 
\PSEND{AB}{a}.\PRECEIVE{BA}{ok}.\PRECEIVE{CA}{d} \]
\[
\sem{KS_{\aca A}}{\ptp B}=(\PRECEIVE{AB}{a}.\PRECEIVE{CB}{c} + \PRECEIVE{CB}{c}.\PRECEIVE{AB}{a}).(\PSEND{BA}{ok}.\PSEND{BC}{ok} + \PSEND{BC}{ok}.\PSEND{BA}{ok})
\]
\[
\sem{KS_{\aca A}}{\ptp C} = \PSEND{CB}{c}.\PRECEIVE{BC}{ok}.\PSEND{CA}{d} + \PRECEIVE{AC}{a}.\PSEND{CA}{ok}.\PSEND{CA}{d}
\]

Note that  $KS_{\aca A}$ has no 
branching condition, indeed $\ithel{\vec{q_9}}{3}=\ithel{\vec{q_7}}{3}$  
but  there is no $(\vec{q_9},(d,\blk,\overline{d}),\vec{q'})$ in $KS_{\aca A}$
for some $\vec{q'}$. Moreover there is a \emph{liable}
transition with label $(\vec{q_9},(\blk,\blk,\overline{d}),\bottom)$ in
Figure~\ref{fig:badexample} which represents the possible
deadlock in the system.

A deadlock configuration in $S(KS_{\aca A})$ is given by the trace $\llbracket \varphi \rrbracket \PSEND{CA}{d}$ where:
\[
\varphi=(\overline{a},a,\blk)(\blk,c,\overline{c})(,\overline{ok},ok)
\]

Indeed $s_0 \TRANSOB{\llbracket \varphi \rrbracket} \TRANSOB{ \PSEND{CA}{d}} s_0'$ where $s_0$ is the initial configuration of $S(KS_{\aca A})$. In the configuration $s_0'=(\vec{q},\vec{w})$ the buffer $\vec{w}$ is not empty, because $\ithel{\vec{w}}{CA}=d$. Moreover the machine $A$ is prevented to read the message on the buffer since its configuration in $\vec{q}$ is $\PRECEIVE{BA}{ok}.\PRECEIVE{CA}{d}$. 
%
%
\end{example}


\section{On extending the approach}\label{sec:extensions}
%

We discuss possible extensions of
our approach to other existing types of 
agreement on CAs, and on different
 semantics of CMs, where there are no constraints on
 the number of messages in a buffer.
 We start by comparing the other existing types of agreement with
 the 1-buffer semantics for CMs.
\paragraph{On agreement}  The property of \emph{agreement} requires that all the requests
are matched. It allows strings made by match and offer actions only. In the following we discuss a correspondence similar to Theorem~\ref{the:deadlockfree} 
for  the property of \emph{agreement}. 
\begin{example}\label{ex:agreement}
Consider the CAs corresponding to the regular expressions $\ptp A=\overline b.d + \overline c.e + d.e$ and 
$\ptp B=\overline d.\overline e$. The controller
$\mathcal K_{A \otimes B}$ for the property of \emph{agreement} is given
by the CA: 
\[
\begin{tikzpicture}[->,>=stealth',shorten >=1pt,auto,node distance=2.5cm,
                    semithick, every node/.style={scale=0.8}]
  \tikzstyle{every state}=[fill=white,draw=black,text=black]

  \node[initial,state] (A)                   	 {$\vec{q_0}$};
  
  \node[state] (B)  [right of=A]              	{$\vec{q_1}$};
  
  \node[state] (C)	[right of=B]         {$\vec{q_2}$};
  
  \node[state] (D)  [below of=A]              {$\vec{q_3}$};
  
  \node[state] (E)	[right of=D]				  {$\vec q_4$};
  
  \node[state,accepting] (G)	[right of=E]				  {$\vec q_6$};

  \path (A)			edge              node{$(\overline{b},\blk)$} (B)
					edge              node[left]{$(\overline c,\blk)$} (D)
					edge				  node{$(d,\overline{d})$} (E)
        (B)			edge				  node{$(d,\overline d)$}(C)
        (C)			edge				  node{$(\blk,\overline e)$}(G)
        (D)			edge				  node{$(\blk, \overline d)$}(E)
        (E)			edge				  node{$(e,\overline e)$}(G);			
\end{tikzpicture}
\]

The translation in Definition \ref{def:tr} yields the CMs:

\[
\sem{\mathcal{K}_{\ptp A \otimes \ptp B}}{\ptp A}  
\qquad \qquad \qquad \qquad \qquad \qquad
\sem{\mathcal{K}_{\ptp A \otimes \ptp B}}{\ptp B}  
\]
\[
\begin{tikzpicture}[->,>=stealth',shorten >=1pt,auto,node distance=2.5cm,
                    semithick, every node/.style={scale=0.8}]
  \tikzstyle{every state}=[fill=white,draw=black,text=black]

  \node[initial,state] (A)                   	 {$\vec{q_0}_1$};
  
  \node[state] (B)  [right of=A]              	{$\vec{q_1}_1$};
  
  \node[state] (C)	[below of=B]         {$\vec{q_2}_1$};
  
 \node[state, accepting] (D)	[right of=B]         {$\vec{q_3}_1$};
  
  \path (A)			edge              node{$\PSEND{A-}{b}$} (B)
					edge              node{$\PSEND{A-}{c}$} (C) 
					edge[bend right]              node[left]{$\PRECEIVE{BA}{d}$} (C) 
  		(B)			edge              node{$\PRECEIVE{BA}{d}$} (D)
  	 	(C)			edge              node[right]{$\PRECEIVE{BA}{e}$} (D);		
\end{tikzpicture} \quad
\begin{tikzpicture}[->,>=stealth',shorten >=1pt,auto,node distance=2.5cm,
                    semithick, every node/.style={scale=0.8}]
  \tikzstyle{every state}=[fill=white,draw=black,text=black]

  \node[initial,state] (A)                   	 {$\vec{q_0}_2$};
  
  \node[state] (B)  [right of=A]              	{$\vec{q_1}_2$};
  
  \node[state] (C)	[below of=B]         {$\vec{q_2}_2$};
  
 \node[state, accepting] (D)	[right of=B]         {$\vec{q_3}_2$};
  
  \path (A)			edge              node{$\PSEND{BA}{d}$} (B)
					edge              node{$\PSEND{B-}{d}$} (C) 
					
  		(B)			edge              node[below]{$\PSEND{B-}{e}$} (D)
  		 	 		edge	[bend left]   node{$\PSEND{BA}{e}$}(D)
  	 	(C)			edge              node[right]{$\PSEND{BA}{e}$} (D);		
\end{tikzpicture}
\]
Under the 1-buffer semantics, the system $S(\mathcal K_{A \otimes B})$
always reaches a deadlock configuration
since in every execution there are messages in the buffer with no receiver, corresponding to the offer actions
in the controller. 

If we assume that the unmatched offers  are consumed instantaneously by
an artificial participant representing the environment, we still have possible 
deadlock configurations in $S(\mathcal K_{A \otimes B})$, for example if the participants execute the sequence of transitions 
$(\vec{q_0}_1, 	\PSEND{A-}{b},\vec{q_1}_1)$, 
$(\vec{q_0}_2, 	\PSEND{B-}{d},\vec{q_2}_2)$, 
$(\vec{q_2}_2, 	\PSEND{BA}{e},\vec{q_3}_2)$.

Note that $\mathcal K_{A \otimes B}$ has no branching condition: 
in $\vec q_1, \vec q_3$ the participant $\ptp B$ is the state $\vec{q_0}_2$, 
but from $\vec q_3$ there is no match transition on action $d$. 
\end{example}

Under the assumptions that the offer actions are consumed by the environment, that it is possible to prove that 
the controller of the CA $\mathcal A$ has a slightly modified version of the 
branching condition if and only if the corresponding
system $S(\mathcal K_{\mathcal A})$ is convergent. The proof is obtained by noticing that in the
1-buffer semantic a deadlock configuration is reached only if a participant
$\ptp A$ send a message $a$ to a participant $\ptp B$ and $\ptp B$ is unable to consume the message.
By Definition \ref{def:tr} this can happen only if there are two different states in the CA where $\ptp A$ is in
the same internal state and the match transition is available only in one of the two states, i.e. $\mathcal K_{\mathcal A}$ has no branching condition.
We also need to consider those configurations which are not convergent
nor deadlocks as done in Theorem 
\ref{the:deadlockfree}.

\paragraph{On weak agreement} For the property of \emph{weak agreement} things are more intricate, indeed it is necessary
to modify the actual translation. This is due to the possibility for a participant to fire a request
if in the future the offer will be available, while in the CMs if the buffer is empty it is 
not possible to perform an input action.

To overcome this problem it is possible to synthesize one or more CMs which act as brokers.
They receive as input all the actions of the participants, which are now translated into outputs, and reply with messages in a way to drive the participants 
through the trace in weak agreement.
\paragraph{On different semantics } We now discuss the relations between CAs and other semantics for CMs. 
\begin{example}
Consider the following CA $\ptp A \otimes \ptp B$:
\[
\begin{tikzpicture}[->,>=stealth',shorten >=1pt,auto,node distance=2.5cm,
                    semithick, every node/.style={scale=0.8}]
  \tikzstyle{every state}=[fill=white,draw=black,text=black]

  \node[initial,state] (A)                   	 {$\vec{q_0}$};
  
  \node[state,accepting] (B)  [right of=A]              	{$\vec{q_1}$};
  
  \node[state, accepting] (C)	[below of=A]         {$\vec{q_2}$};

  \path (A)			edge              node{$(\overline{a},a)$} (B)
					edge              node{$(b, \overline b)$} (C);			
\end{tikzpicture}
\]
We have $\ptp A=\overline a + b$, $\ptp B=\overline b + a$. This CA is \emph{strongly safe} and has the branching condition. However, by considering 
the non 1-buffer semantics for CMs, the translated system is not $\emph{deadlock free}$.
Indeed a possible deadlock in $S(\mathcal K_{A \otimes B})$ is generated if the first participant performs the action $\PSEND{AB}{a}$ and then the second performs the action $\PSEND{BA}{b}$. 
This is because participant $\ptp B$ can ignore the message received by the 
participant $\ptp A$ and follow the other branch of the CA. These behaviours is not permitted
by the 1-buffer semantics, which forces participants to follow the successful branch.
\end{example}

The previous example shows that if we allow a less constrained semantics for CMs then 
Theorem \ref{the:deadlockfree} does not hold any more. Indeed, we need to introduce other
constraints on the behaviour of the CAs to obtain a correspondence with convergent
systems.

Note that in the previous example, from state $\vec q_0$ both participants contain
a branch where they can execute an input or an output action. This is called a
\emph{mixed choice} state.
It is possible to prove that if a CA has the branching condition, it is strongly safe and has no
mixed choice states then the
corresponding system is convergent with non 1-buffer semantics.
However the converse does not hold.
Indeed there exists systems with mixed choice states that enjoy convergence.

\begin{example}
Consider the following CA $\ptp A \otimes \ptp B$:
\[
\begin{tikzpicture}[->,>=stealth',shorten >=1pt,auto,node distance=2.5cm,
                    semithick, every node/.style={scale=0.8}]
  \tikzstyle{every state}=[fill=white,draw=black,text=black]

  \node[initial,state] (A)                   	 {$\vec{q_0}$};
  
  \node[state] (B)  [right of=A]              	{$\vec{q_1}$};
  
  \node[state] (C)	[below of=B]         {$\vec{q_2}$};
  
 \node[state, accepting] (D)	[right of=B]         {$\vec{q_3}$};

  \path (A)			edge              node{$(\overline{a},a)$} (B)
				edge              node{$(b, \overline b)$} (C) 
  	 (B)			edge              node{$(b, \overline b)$} (D)
  	 (C)			edge              node{$(\overline{a},a)$} (D);		
\end{tikzpicture}
\]
We have $\ptp A= \overline a.b +b.\overline a$, $\ptp B=\overline b.a + a.\overline b$.
This CA is \emph{strongly safe}, has the branching condition, and contains a mixed choice state, i.e. $\vec q_0$. Nevertheless, 
the corresponding system is convergent.
\end{example}

As showed by the previous example, for obtaining a correspondence similar to Theorem \ref{the:deadlockfree},
we need to consider those \emph{bad} mixed choices, where the participants behave differently 
in the different branches.
\section{Concluding Remarks}\label{sec:conc}

We have established a formal correspondence between contract automata, an
orchestration model, and communicating machines, a model of
choreography.
An interesting implication of our results is that contract automata can
be seen as an alternative semantics of communicating machines.
In fact, the product of communicating machines could be built
as a
contract automaton once match-actions are properly defined as
tuples where 
output messages appear before the corresponding input ones.
However, contract automata are more general in the sense that they
would also admit matches where a request appears before its
corresponding offer.
Exploring those alternative semantics is of interest and it is scope
for future work.

The dichotomy orchestration-choreography has been discussed in many
papers (see e.g.,~\cite{pel03}).
The only formal results (we are aware of) that link a choreography
to an orchestration framework is in~\cite{bglz06}.
A precise comparison with~\cite{bglz06} is not straightforward  as
the models use a bisimulation-like relation to exhibit a conformance
relation between choreographed and orchestrated computations.
Here, we study the conditions to ``force'' orchestrated computations
to well-behave (strong safety), and convergence in a choreography
framework in terms of strong safety in the orchestration one.

A practical outcome of our result is that strong safe contract
automata can execute without controller (if they are trusted).
In fact, one can translate them into communicating machines that run
without central control.

For the time being, our result only states that strong agreement corresponds to
the 1-buffer semantics of communicating machines.
In other words, the execution of the machines is basically synchronous.
(We note that this has some advantages since communicating machines
with 1-buffer semantics are more computationally tractable~\cite{cf05}.)
We conjecture that results similar to the one presented in this paper
can be achieved for weaker notions of agreement (for example, the ones
in~\cite{BasileDF14}) when considering asynchronous behaviours of
communicating machines.
This is nonetheless left as future work.


\bibliographystyle{eptcs}
\bibliography{bib}

\end{document}